\documentclass[12pt]{article}

\usepackage[margin=1.1in]{geometry}
\usepackage{amsmath,amssymb,amsthm}
\usepackage{booktabs}
\usepackage{array}
\usepackage{enumitem}
\usepackage[authoryear,round]{natbib}
\usepackage[colorlinks=true,linkcolor=blue,citecolor=blue,urlcolor=blue]{hyperref}

\newtheorem{theorem}{Theorem}
\newtheorem{proposition}{Proposition}
\newtheorem{corollary}{Corollary}
\newtheorem{lemma}{Lemma}
\newtheorem{assumption}{Assumption}
\theoremstyle{remark}
\newtheorem{remark}{Remark}

\newcommand{\E}{\mathbb{E}}
\newcommand{\Prob}{\mathbb{P}}
\newcommand{\R}{\mathbb{R}}
\newcommand{\1}{\mathbf{1}}
\newcommand{\calU}{\mathcal{U}}
\newcommand{\calA}{\mathcal{A}}
\newcommand{\calM}{\mathcal{M}}
\newcommand{\calQ}{\mathcal{Q}}
\newcommand{\wbar}{\bar{w}}

\title{Partial Identification from  LLM Prompts}
\author{Xiaohong ~Chen\\ Yale University \and Ashesh Rambachan \\ MIT \and Elie ~Tamer \\ Harvard University}
\date{\today}

\begin{document}
\maketitle

\begin{abstract}
Large language models are increasingly used as binary classifiers when the true label is latent. We study partial identification of the prevalence $(\theta=P(X^*=1))$ from panels of LLM reports whose errors may be arbitrarily dependent given the truth. The design of replication determines the observable, and hence the identifying content: repeated prompts to one model yield a count, several named models a response vector, and both a response matrix. Cast as a two-component finite mixture, the problem makes the identification failure transparent—absent restrictions that separate the latent components, the prevalence
$\theta$ is completely unidentified, and weak stochastic-ordering restrictions (first-order dominance, monotone likelihood ratio, mean ordering) leave the identified set at $[0,1]$. Identifying power comes instead from externally calibrated scores and events, which discipline the mixture in the spirit of the misclassification and corrupted-data literature. We characterize the resulting bounds, establishing validity and sharpness, and give an exact account of the identifying information in the full score distribution beyond its mean. When named models are asked repeated versions of the same question, what identifies $\theta$
 is not the number of positive answers but which models agree across prompts—a feature a vote count discards. An extension  derives implied bounds on regression coefficients when
$X^*$ is a regressor of interest that is not directly observed.
\end{abstract}
\newpage

\baselineskip=1.15\baselineskip
\section{Introduction}

Large language models (LLMs) are now routinely used as binary classifiers. They label text as toxic or non-toxic, factual or non-factual, policy-violating or safe, relevant or irrelevant, and so on. In many applications the target label is not observed in the main sample. The econometrician observes only LLM reports and wants to learn the latent prevalence
\[
\theta=\Prob(X^*=1),
\]
where $X^*\in\{0,1\}$ is the true label.

The central problem is not only that LLMs make mistakes. It is that there is more than one way to replicate an LLM measurement, and different replications produce different observable objects. This paper distinguishes three designs given in Table \ref{tab:designs} below.

\begin{table}[ht]
\centering\small
\caption{Three LLM measurement-panel designs}
\label{tab:designs}
\begin{tabular}{p{3.1cm}p{2.6cm}p{4.2cm}p{4.2cm}}
\toprule
Design & Data for one item & Recommended observable & Reason \\
\midrule
One LLM, repeated questions or prompt variants & $R_1,\dots,R_M$ & Count $S=\sum_{m=1}^M R_m$ & No model identity to preserve; if prompt variants are exchangeable, prompt labels carry no structural content. \\
\addlinespace
Many named LLMs, one question each & $Y=(Y_1,\dots,Y_J)$ & Full named vector $Y\in\{0,1\}^J$ & Model identities matter: LLMs differ in sensitivity, specificity, refusal behavior, and bias. \\
\addlinespace
Many named LLMs, repeated questions each & $R=(R_{jm})_{j\le J,m\le M}$ & Full matrix $R$; under prompt exchangeability, the column-pattern histogram $N$ (lossless); otherwise the model-count vector $T_j=\sum_m R_{jm}$ as a practical coarsening & The matrix is sharpest. $N$ preserves named-model agreement within exchangeable prompts; $T$ preserves model identity while aggregating prompt variation. \\
\bottomrule
\end{tabular}

\smallskip
\footnotesize Notation reserves $J$ for named models and $M$ for repeated prompts. The first row covers the common ``same LLM asked $J$ repeated questions'' case after relabeling that count as $M$.
\end{table}

If the same LLM is asked repeated exchangeable versions of the same binary question, a count is natural. By contrast, if GPT-4, GPT-4 Turbo, Claude, and open-weight models are each asked once, replacing the named vector by a vote count throws away information: the response patterns $(1,1,0,0)$ and $(0,0,1,1)$ have the same count but different implications if the first two models are more reliable. If each named model is asked repeated prompt variants, neither the simple count nor the single-prompt named vector is adequate; the analyst observes a two-way measurement panel and must decide whether model identity, prompt identity, or both should be preserved.

\paragraph{Paper Contributions}
The baseline nonidentification result and the use of sensitivity or specificity-style restrictions to bound a prevalence are applications of partial identification techniques (see \cite{manski1999identification}, \cite{tamer2010partial}). The paper's contribution is to adapt, organize, and extend that literature to handle LLM measurement panels, where errors are plausibly dependent across models and prompts, and to add results that are, to our knowledge, new in this setting:
\begin{enumerate}[leftmargin=1.6em,itemsep=1pt]
\item A design taxonomy (Table~\ref{tab:designs}) that maps the source of replication---repeated prompts, named models, or both---into the correct observable object, with an exact characterization of when a coarsening of the response matrix is \emph{truth-sufficient} (Proposition~\ref{prop:sufficiency}).
\item A unified calibration theory: all of our identifying restrictions are calibrated scores or calibrated events. We prove validity of the resulting bounds, give an exact \emph{sharpness characterization} of the score bounds via a rearrangement function of the observed score distribution (Theorem~\ref{thm:sharp}), show the simple linear bounds are sharp when only the score mean is recorded and exactly sharp for binary scores (Corollary~\ref{cor:sharp}), and prove sharpness of the event bounds.
\item A symmetry result for the two-way design: under prompt exchangeability and an explicit restriction-compatibility condition (Assumption~\ref{ass:compat}), the full matrix has a lossless reduction to the column-pattern histogram $N$ (Theorem~\ref{thm:lossless}). No independence is assumed anywhere.
\item Diagnostics---coarsening loss, row/column influence, dependence audits, and a minimum-tolerance specification test---that add transparency without adding identifying assumptions, together with a multiple-testing-aware treatment of optimized calibrated events.
\end{enumerate}
An augmented empirical illustration quantifies the practical payoff: in a toxicity-labeling application with three named LLMs, reporter-specific calibration on the named vector roughly halves the width of the identified set relative to the count coarsening used in earlier drafts.

\section{Relation to the literature}\label{sec:lit}

We connect our results to important literatures.

\emph{Latent-class and multiple-rater models.} Estimating rater error rates without a gold standard goes back at least to \citet{DawidSkene1979}. That tradition typically obtains point identification through conditional independence of raters given the truth (or low-order dependence corrections). Our setting deliberately drops conditional independence: LLMs share training corpora, benchmarks, synthetic data, distillation pipelines, and alignment procedures, so their errors can be arbitrarily dependent given $X^*$. Without independence, the Dawid--Skene identification route is unavailable, and the model becomes a two-component mixture with unrestricted components---hence partial identification.

\emph{Partial Identification with Misclassification, corrupted data and Mixtures.} Bounds on parameters under misclassified or contaminated outcomes are classical \citep{HorowitzManski1995,Molinari2008,Hu2008}. Our calibrated-score bounds are recognizably of this family: bounds on sensitivity and specificity translate linearly into bounds on prevalence. What we add is (i) the score/event organization that nests reporter-specific accuracy, thresholds, unanimity, weighted ensembles, and matrix events as one assumption type rather than many; and (ii) the sharpness analysis of Section~\ref{sec:calibration}, which distinguishes what is sharp given the score mean from what is sharp given the score law. Finally \citet{HenryKitamuraSalanie2014} study partial identification of finite mixtures using observable variation in mixture weights. Our degeneracy result (Proposition~\ref{prop:degeneracy}) is the mixture-identification observation specialized to LLM panels; its value is the discipline it imposes on applied work, not mathematical novelty. Our group-level extension (Appendix~\ref{app:groups}) connects directly to the exclusion-restriction logic of that literature.

\emph{Classifier calibration and LLM-based labeling.} A growing literature calibrates classifier and LLM confidence and uses calibrated predictions for prevalence estimation \citep{SilvaFilho2023,Hovsepian2024,Multical2026}. We treat external calibration as the \emph{source of identification}: validated lower confidence bounds on score sensitivity/specificity or event predictive values are exactly the inputs our bounds require. The division of labor is deliberate: that literature supplies calibrated constants; this paper says what those constants identify under arbitrary dependence.

\section{Framework and nonidentification}\label{sec:framework}

\subsection{Latent truth, response matrices, and summaries}

For each item, let $X^*\in\{0,1\}$ be the latent truth. Let $j=1,\dots,J$ index named LLMs and $m=1,\dots,M$ index repeated questions, prompt variants, stochastic completions, or elicitation templates. The binary response from model $j$ under prompt $m$ is $R_{jm}\in\{0,1\}$, and the full response matrix is
\[
R=(R_{jm})_{j\le J,\,m\le M}\in\{0,1\}^{J\times M}.
\]
The parameter of interest is $\theta=\Prob(X^*=1)$. In the spirit of the partial identification literature, we develop bounds on $\theta$ using minimal plausible assumptions. 

\  \ \ 

Write $\pi_R(r)=\Prob(R=r)$ and, conditional on the latent state, $f_z(r)=\Prob(R=r\mid X^*=z)$ for $z\in\{0,1\}$. Then
\begin{equation}
\pi_R(r)=(1-\theta)f_0(r)+\theta f_1(r),\qquad r\in\{0,1\}^{J\times M}.
\label{eq:matrixmixture}
\end{equation}
\emph{No factorization of $f_z$ is assumed}: entries of $R$ may be arbitrarily dependent within each latent state.

The analyst may work with a finite summary $U=g(R)\in\calU$: a count, a named response vector, a model-count vector, a prompt-count vector, a column-pattern histogram, or the matrix itself. Let $p_U(u)=\Prob(U=u)$ and $q_{z,U}(u)=\Prob(U=u\mid X^*=z)$. Every summary satisfies
\begin{equation}
p_U(u)=(1-\theta)q_{0,U}(u)+\theta q_{1,U}(u),\qquad u\in\calU.
\label{eq:mixture}
\end{equation}
For restrictions $\calA_U$ on $(q_{0,U},q_{1,U})$, define the identified set
\begin{equation}
\Theta_U(p_U;\calA_U)=\bigl\{\theta\in[0,1]:\exists\,(q_{0,U},q_{1,U})\text{ satisfying \eqref{eq:mixture} and }\calA_U\bigr\}.
\label{eq:idset}
\end{equation}

\ \ 

\subsection{Degeneracy and weak ordering}

\begin{proposition}[Nonidentification of $\theta$]\label{prop:degeneracy}
Fix any finite summary $U=g(R)$. Without restrictions that rule out equality of the latent component distributions, $\Theta_U(p_U)=[0,1]$: for every $\theta\in[0,1]$, the choice $q_{0,U}=q_{1,U}=p_U$ satisfies \eqref{eq:mixture} and the data contain no information about $\theta.$
\end{proposition}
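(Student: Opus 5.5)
The plan is a direct verification against the definition \eqref{eq:idset}. The identified set is a subset of $[0,1]$ by construction, so only the reverse inclusion $[0,1]\subseteq\Theta_U(p_U)$ needs proof. The restriction set $\calA_U$ is taken to be empty, or more generally any set of restrictions that does not exclude equal components. It therefore suffices to exhibit, for each $\theta\in[0,1]$, a pair $(q_{0,U},q_{1,U})$ of probability mass functions on $\calU$ that satisfies \eqref{eq:mixture}.

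Fix $\theta\in[0,1]$ and set $q_{0,U}=q_{1,U}=p_U$. The first step is to check that these are admissible conditional laws. Because $U=g(R)$ takes values in the finite set $\calU$, $p_U$ is nonnegative and sums to one, so each $q_{z,U}$ is a valid pmf. The second step is the mixture identity: for every $u\in\calU$,
\[
(1-\theta)q_{0,U}(u)+\theta q_{1,U}(u)=(1-\theta)p_U(u)+\theta p_U(u)=p_U(u).
\]
Hence \eqref{eq:mixture} holds, and $\theta\in\Theta_U(p_U)$.

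To make the point that the data carry no information, one can also exhibit a full data-generating process. Let $X^*\sim\mathrm{Bernoulli}(\theta)$ be drawn independently of $R\sim\pi_R$, where $\pi_R$ is the observed law of $R$. This gives $f_0=f_1=\pi_R$, so \eqref{eq:matrixmixture} holds, and pushing forward through $g$ gives $q_{0,U}=q_{1,U}=p_U$. Consequently the observable law is the same for every $\theta$: the mapping $\theta\mapsto$ (law of $U$) is constant on this family, and $p_U$ cannot distinguish any $\theta$ from any other.

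There is no substantive obstacle here. The only point needing care is the interpretation of ``without restrictions that rule out equality.'' The construction is valid exactly when $\calA_U$ admits the pair $(p_U,p_U)$, so the statement should be read as applying to any restriction set containing this pair. The boundary cases $\theta\in\{0,1\}$ need no separate treatment, since one component then carries zero weight and the identity still holds.
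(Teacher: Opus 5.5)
Your proposal is correct and follows the paper's proof: setting $q_{0,U}=q_{1,U}=p_U$ makes the mixture identity \eqref{eq:mixture} hold for every $\theta\in[0,1]$. Your extra checks (that $p_U$ is a valid pmf, and the explicit construction with $X^*$ independent of $R$) are harmless elaborations. The latter is the same uninformative experiment the paper later uses in Theorem~\ref{thm:impossibility}.
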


\begin{proof}
For any $u$, $(1-\theta)p_U(u)+\theta p_U(u)=p_U(u)$, for every $\theta\in[0,1]$.
\end{proof}

Weak shape restrictions do not alter this conclusion. For count summaries, first-order stochastic dominance (FOSD), monotone likelihood ratio (MLR) ordering, and weak mean ordering formalize the idea that positive items should produce more positive reports; for vector or matrix summaries, coordinatewise stochastic orders play the same role. These restrictions are often plausible, but they permit $q_0=q_1$, so the degenerate decomposition remains feasible.

For clarity, in a count experiment $S\in\{0,\ldots,M\}$, FOSD means
\[
\sum_{t=s}^M q_1(t)\ge \sum_{t=s}^M q_0(t),\qquad s=1,\ldots,M.
\]
MLR means that $q_1/q_0$ is increasing in the usual cross-product sense: for $s>s'$, $q_1(s)q_0(s')\ge q_1(s')q_0(s)$, with the standard conventions at zeros. Weak mean ordering means $\E[S\mid X^*=1]\ge \E[S\mid X^*=0]$. For vector or matrix summaries, coordinatewise FOSD means $\E[\varphi(U)\mid X^*=1]\ge \E[\varphi(U)\mid X^*=0]$ for every bounded coordinatewise increasing function $\varphi$; equivalently, the inequality holds for all increasing upper sets. All of these are weak orders. Hence $q_0=q_1=p_U$ satisfies them with equality.

\subsection{A confidence-set implication}

The nonidentification result has an inferential counterpart. Let $\calM_U$ be a class of joint laws for $(X^*,U)$. Suppose that for every observable law $p\in\Delta(\calU)$ and every $t$ in a set $\Theta_0\subseteq[0,1]$, the uninformative experiment $U\sim p$, $X^*\sim\mathrm{Bernoulli}(t)$, $X^*\perp U$ belongs to $\calM_U$.

\begin{theorem}[Distribution-free impossibility]\label{thm:impossibility}
Let $\widehat\Theta_{n,\alpha}\subseteq[0,1]$ be a possibly randomized confidence set for $\theta$, constructed from an i.i.d.\ sample $U_1,\dots,U_n$. If
$\sup_{P\in\calM_U}\Prob_P\{\theta(P)\notin\widehat\Theta_{n,\alpha}\}\le\alpha$,
then for every observable law $p$ and every $t\in\Theta_0$, $\Prob_p\{t\notin\widehat\Theta_{n,\alpha}\}\le\alpha$. Consequently
\[
\E_p\bigl[\lambda(\widehat\Theta_{n,\alpha}\cap\Theta_0)\bigr]\ge(1-\alpha)\lambda(\Theta_0),
\]
where $\lambda$ is Lebesgue measure. If $\Theta_0=[0,1]$ then $\E_p[\lambda(\widehat\Theta_{n,\alpha})]\ge 1-\alpha$, and if the confidence set is always an interval contained in $[0,1]$ then $\Prob_p\{\widehat\Theta_{n,\alpha}=[0,1]\}\ge 1-2\alpha$.
\end{theorem}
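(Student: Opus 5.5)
The plan is to use the fact that the sample $U_1,\dots,U_n$ has the same law under every uninformative experiment with observable marginal $p$, whatever the value of $t$. Fix an observable law $p$ and some $t\in\Theta_0$. Let $P_{p,t}$ denote the joint law with $U\sim p$, $X^*\sim\mathrm{Bernoulli}(t)$ and $X^*\perp U$. By hypothesis $P_{p,t}\in\calM_U$, and $\theta(P_{p,t})=t$. The confidence set is built only from $U_1,\dots,U_n$ (and possibly independent randomization), and these are i.i.d.\ $p$ under $P_{p,t}$. Hence $\Prob_{P_{p,t}}\{t\notin\widehat\Theta_{n,\alpha}\}=\Prob_p\{t\notin\widehat\Theta_{n,\alpha}\}$. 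The uniform coverage hypothesis then gives $\Prob_p\{t\notin\widehat\Theta_{n,\alpha}\}\le\alpha$ for every $t\in\Theta_0$. This is the same degeneracy as in Proposition~\ref{prop:degeneracy}, recast as a statement about sampling distributions.

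For the expected-length bound I would use Fubini on the product of the sample space and $\Theta_0$:
\[
\E_p\bigl[\lambda(\widehat\Theta_{n,\alpha}\cap\Theta_0)\bigr]=\int_{\Theta_0}\Prob_p\{t\in\widehat\Theta_{n,\alpha}\}\,dt\ge(1-\alpha)\lambda(\Theta_0).
\]
This step needs joint measurability of $\{(\omega,t):t\in\widehat\Theta_{n,\alpha}(\omega)\}$. I expect measurability to be the main technical point. I would handle it by assuming it as part of the definition of a confidence set, which is standard and automatically true for interval-valued sets with measurable endpoints. Taking $\Theta_0=[0,1]$ gives $\E_p[\lambda(\widehat\Theta_{n,\alpha})]\ge1-\alpha$.

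For the interval claim, suppose $\widehat\Theta_{n,\alpha}$ is an interval in $[0,1]$ with endpoints $L\le U'$. Apply the pointwise coverage bound at $t=0$ and at $t=1$. This gives $\Prob_p\{0\notin\widehat\Theta\}\le\alpha$ and $\Prob_p\{1\notin\widehat\Theta\}\le\alpha$. An interval contained in $[0,1]$ that contains both $0$ and $1$ must equal $[0,1]$. A union bound therefore gives $\Prob_p\{\widehat\Theta_{n,\alpha}=[0,1]\}\ge1-2\alpha$.

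The only subtlety is that the endpoints $0,1$ must lie in $\Theta_0$. In the statement this case is $\Theta_0=[0,1]$, so the condition holds.
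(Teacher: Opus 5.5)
Your proposal is correct and follows the paper's proof essentially step for step. Under the uninformative law ($U\sim p$, $X^*\perp U$) the sample has law $p^n$ whatever $t$ is, so coverage gives the pointwise bound; Tonelli integrates that bound over $\Theta_0$; and coverage of the endpoints $0$ and $1$ plus a union bound gives the interval claim. Your remark on joint measurability of $\{(\omega,t): t\in\widehat\Theta_{n,\alpha}(\omega)\}$ makes explicit a condition that the paper's one-line appeal to Tonelli leaves implicit.
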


\begin{proof}
Fix $p$ and $t\in\Theta_0$. The law with $U\sim p$, $X^*\sim\mathrm{Bernoulli}(t)$, $X^*\perp U$ belongs to $\calM_U$; under it the sample has distribution $p^n$ and the prevalence is $t$, so coverage implies $\Prob_p\{t\notin\widehat\Theta_{n,\alpha}\}\le\alpha$. Integrating over $t\in\Theta_0$ (Tonelli) gives the expected-length bound. If $\Theta_0=[0,1]$ and the set is always an interval, coverage of both endpoints implies the interval is $[0,1]$; the final claim follows by the union bound.
\end{proof}

\ \ 

\subsection{Coarsening and information loss}

\begin{theorem}[Coarsening weakens identification under compatible restrictions]\label{thm:coarsening}
Let $U_2=h(U_1)$. Suppose every feasible $(\theta,q_{0,U_1},q_{1,U_1})$ under restrictions $\calA_{U_1}$ projects to a feasible $(\theta,q_{0,U_2},q_{1,U_2})$ under restrictions $\calA_{U_2}$. Then
$\Theta_{U_1}(p_{U_1};\calA_{U_1})\subseteq\Theta_{U_2}(p_{U_2};\calA_{U_2})$.
\end{theorem}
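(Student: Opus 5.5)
The argument is a direct element chase through the definition \eqref{eq:idset}; the only real content is checking that the projected objects satisfy the mixture equation for $U_2$. Take any $\theta\in\Theta_{U_1}(p_{U_1};\calA_{U_1})$. By definition there exist conditional laws $(q_{0,U_1},q_{1,U_1})$ on $\calU_1$ that satisfy \eqref{eq:mixture} for $U_1$ and the restrictions $\calA_{U_1}$. We want to exhibit a pair $(q_{0,U_2},q_{1,U_2})$ on $\calU_2$ that satisfies \eqref{eq:mixture} for $U_2$ together with $\calA_{U_2}$, at the same value of $\theta$.

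The natural candidate is the pushforward under $h$. Set
\[
q_{z,U_2}(v)=\sum_{u\in\calU_1:\,h(u)=v}q_{z,U_1}(u),\qquad z\in\{0,1\},\ v\in\calU_2 .
\]
Because $\calU_1$ is finite and $h$ is a function, each $q_{z,U_2}$ is a probability mass function on $\calU_2$. The observable law transforms in the same way, since $U_2=h(U_1)$ gives $p_{U_2}(v)=\sum_{h(u)=v}p_{U_1}(u)$.

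Next, sum the $U_1$ mixture identity over the fiber $h^{-1}(v)$. Linearity yields
\[
p_{U_2}(v)=(1-\theta)q_{0,U_2}(v)+\theta q_{1,U_2}(v),\qquad v\in\calU_2,
\]
so the pushforward pair satisfies \eqref{eq:mixture} for $U_2$. This is exactly what ``projects'' means in the hypothesis. Moreover, if $(q_{0,U_1},q_{1,U_1})$ are the true conditional laws of $U_1$ given $X^*$, then the pushforward pair consists of the conditional laws of $U_2$ given $X^*$. So the projection is the same joint law for $(X^*,R)$, merely viewed through the coarser summary.

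By hypothesis, the projected triple $(\theta,q_{0,U_2},q_{1,U_2})$ is feasible under $\calA_{U_2}$. Hence $\theta\in\Theta_{U_2}(p_{U_2};\calA_{U_2})$. Since $\theta$ was arbitrary, the inclusion follows.

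The one point that needs care is that the hypothesis must be read with ``projects'' meaning the pushforward through $h$. That reading is what makes the mixture equation for $U_2$ automatic, so it does not have to be assumed. The restriction-compatibility part, that $\calA_{U_1}$-feasibility implies $\calA_{U_2}$-feasibility of the pushforward, is assumed outright, so nothing further is required. It is worth adding a short remark that the assumption is needed. Without compatibility the inclusion can fail: for instance, if $\calA_{U_2}$ imposes a calibration constraint that is not implied by any constraint on $U_1$, the coarser set could be smaller than the finer one.
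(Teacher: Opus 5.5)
Your proof is correct and follows the paper's argument: you push the component laws forward along the fibers of $h$, sum the $U_1$ mixture identity over each fiber to obtain the $U_2$ mixture identity, and invoke the compatibility hypothesis to get $\calA_{U_2}$. Your closing remark on why compatibility is needed is consistent with the paper's own discussion of matched restrictions.
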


\begin{proof}
Given a feasible point under $U_1$, define $q_{z,U_2}(u_2)=\sum_{u_1:h(u_1)=u_2}q_{z,U_1}(u_1)$. Aggregating \eqref{eq:mixture} over fibers of $h$ gives the mixture equation for $U_2$; compatibility gives the projected restrictions.
\end{proof}

Section~\ref{sec:design3} sharpens this in the matrix design: Proposition~\ref{prop:sufficiency} characterizes exactly when a coarsening is lossless for the truth, and Theorem~\ref{thm:lossless} gives a design symmetry under which a specific coarsening is lossless for the identified set.

\section{Identification by calibration}\label{sec:calibration}

This section contains the paper's maintained identifying content. Everything else---reporter-specific accuracy, threshold classifiers, relaxed support, unanimity, weighted votes, matrix-agreement rules---is a special case obtained by choosing the summary $U$, the score $w$, or the events $A,B$.

\subsection{Calibrated score bounds: validity}

Let $w:\calU\to[0,1]$ be a pre-specified score and write
\[
\wbar=\E[w(U)]=\sum_{u\in\calU}w(u)\,p_U(u).
\]
The score may be a count fraction, a threshold rule, a named-model report, a weighted ensemble, or a matrix-agreement statistic.

\begin{proposition}[Calibrated score bounds]\label{prop:score}
Suppose
\begin{equation}
\E[w(U)\mid X^*=1]\ge a,\qquad \E[1-w(U)\mid X^*=0]\ge b,\qquad a,b\in(0,1].
\label{eq:scorecal}
\end{equation}
Then every admissible prevalence satisfies
\begin{equation}
\max\Bigl\{0,\ \frac{\wbar+b-1}{b}\Bigr\}\ \le\ \theta\ \le\ \min\Bigl\{1,\ \frac{\wbar}{a}\Bigr\}.
\label{eq:scorebounds}
\end{equation}
\end{proposition}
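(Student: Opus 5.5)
The plan is to decompose the observed score mean by the law of total expectation and then apply each calibration inequality in \eqref{eq:scorecal} separately. Fix an admissible $\theta$, meaning there exist $(q_{0,U},q_{1,U})$ satisfying \eqref{eq:mixture} together with \eqref{eq:scorecal}. Write $\mu_z=\E[w(U)\mid X^*=z]=\sum_u w(u)q_{z,U}(u)$ for $z\in\{0,1\}$. Multiplying \eqref{eq:mixture} by $w(u)$ and summing over the finite set $\calU$ gives the exact identity
\[
\wbar=(1-\theta)\mu_0+\theta\mu_1 .
\]
Since $w$ takes values in $[0,1]$, both $\mu_0$ and $\mu_1$ lie in $[0,1]$.

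For the upper bound, drop the nonnegative term $(1-\theta)\mu_0\ge 0$ and use $\mu_1\ge a$. This gives $\wbar\ge\theta\mu_1\ge\theta a$. Because $a>0$, dividing yields $\theta\le\wbar/a$, and combining with $\theta\le 1$ gives the stated minimum.

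For the lower bound, apply the same argument to the complementary score $1-w$. Its mean is
\[
1-\wbar=(1-\theta)(1-\mu_0)+\theta(1-\mu_1).
\]
Drop $\theta(1-\mu_1)\ge 0$ and use $1-\mu_0\ge b$ to get $1-\wbar\ge(1-\theta)b$. Since $b>0$, this rearranges to $\theta\ge(\wbar+b-1)/b$, and combining with $\theta\ge 0$ gives the stated maximum.

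The argument needs no assumption linking $q_0$ and $q_1$, so arbitrary dependence across the entries of $R$ is allowed, as the statement requires. The only step that needs care is the sign bookkeeping when dividing: it uses $a,b>0$, which is why the statement takes $a,b\in(0,1]$. Beyond that there is no real obstacle, since validity is a pure averaging argument.
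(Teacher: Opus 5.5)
Your proposal is correct and follows the paper's own proof: decompose $\wbar=(1-\theta)\mu_0+\theta\mu_1$ and bound each conditional mean using the calibration constraints together with $w\in[0,1]$. Your lower-bound step via the complementary score, $1-\wbar\ge(1-\theta)b$, is algebraically identical to the paper's $\wbar\le(1-\theta)(1-b)+\theta$.
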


\begin{proof}
Let $\mu_z=\E[w(U)\mid X^*=z]$, so $\wbar=(1-\theta)\mu_0+\theta\mu_1$. Since $\mu_1\ge a$ and $\mu_0\ge 0$, $\wbar\ge\theta a$, giving the upper bound. Since $\mu_0\le 1-b$ and $\mu_1\le 1$, $\wbar\le(1-\theta)(1-b)+\theta=1-b+b\theta$, giving the lower bound. Intersect with $[0,1]$.
\end{proof}

\subsection{Sharpness: score mean vs score law}\label{sec:sharp}

 The interval \eqref{eq:scorebounds} uses only the \emph{mean} $\wbar$ of the score. When the analyst observes the full law of $w(U)$, the sharp identified set can be strictly smaller, and it admits an exact characterization through a rearrangement (concentration) function.

For $t\in[0,1]$ define
\begin{equation}
W^+(t)\ =\ \max\Bigl\{\textstyle\sum_{u}w(u)h(u)\ :\ 0\le h(u)\le p_U(u)\ \forall u,\ \ \textstyle\sum_u h(u)=t\Bigr\}.
\label{eq:Wplus}
\end{equation}
$W^+(t)$ is the largest possible contribution to $\E[w(U)]$ from a sub-population of mass $t$: it is computed by a greedy fill, allocating mass to the values of $u$ with the largest $w(u)$ first (a Hardy--Littlewood rearrangement bound). $W^+$ is concave and nondecreasing, with $W^+(0)=0$ and $W^+(1)=\wbar$, and satisfies $W^+(t)\le\min\{t,\wbar\}$.

\begin{theorem}[Sharp identified set under score calibration]\label{thm:sharp}
Maintain \eqref{eq:scorecal} and suppose the analyst observes $p_U$ (hence the law of $w(U)$) but imposes no other restriction. The identified set for $\theta$ is
\begin{equation}
\Theta_w\ =\ \Bigl\{\theta\in[0,1]\ :\ W^+(\theta)\ \ge\ a\theta\ \ \text{and}\ \ W^+(\theta)\ \ge\ \wbar-(1-b)(1-\theta)\Bigr\},
\label{eq:sharpset}
\end{equation}
and $\Theta_w$ is a (possibly empty) closed interval.
\end{theorem}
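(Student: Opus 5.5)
The proof reduces the existence of a feasible decomposition to a single linear program in the sub-measure $h=\theta q_{1,U}$. Fix $\theta\in[0,1]$.

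\emph{Necessity.} Suppose $(q_{0,U},q_{1,U})$ satisfies \eqref{eq:mixture} and \eqref{eq:scorecal}. Set $h(u)=\theta q_{1,U}(u)$. Then $0\le h(u)\le p_U(u)$, because $p_U(u)-h(u)=(1-\theta)q_{0,U}(u)\ge0$, and $\sum_u h(u)=\theta$.

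The first calibration inequality in \eqref{eq:scorecal} reads $\sum_u w(u)h(u)\ge a\theta$.

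The second inequality says $\sum_u w(u)(1-\theta)q_{0,U}(u)\le(1-b)(1-\theta)$. Since $(1-\theta)q_{0,U}=p_U-h$, this is $\wbar-\sum_u w(u)h(u)\le(1-b)(1-\theta)$, that is, $\sum_u w(u)h(u)\ge \wbar-(1-b)(1-\theta)$.

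Both constraints are lower bounds on the single linear functional $\sum_u w(u)h(u)$. That functional is at most $W^+(\theta)$ by definition \eqref{eq:Wplus}, so both inequalities in \eqref{eq:sharpset} hold.

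\emph{Sufficiency.} Suppose $\theta\in(0,1)$ satisfies both inequalities in \eqref{eq:sharpset}. Let $h^\star$ attain the maximum in \eqref{eq:Wplus}; the feasible set is compact and nonempty, so a maximizer exists and is given by the greedy fill. Define $q_{1,U}=h^\star/\theta$ and $q_{0,U}=(p_U-h^\star)/(1-\theta)$. Both are probability vectors, and $(1-\theta)q_{0,U}+\theta q_{1,U}=p_U$. Reversing the two computations above shows that $\sum_u w(u)h^\star(u)=W^+(\theta)$ delivers both calibration inequalities in \eqref{eq:scorecal}. Since no other restriction is imposed, $\theta$ is admissible.

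\emph{Endpoints.} The cases $\theta\in\{0,1\}$ need separate treatment, because one mixture component is then unrestricted by the data and only has to satisfy its own calibration constraint.

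At $\theta=0$, the inequality $W^+(0)=0\ge 0$ holds automatically. The first inequality in \eqref{eq:scorecal} then requires some distribution $q_{1,U}$ on $\calU$ with $\E_{q_{1,U}}[w]\ge a$, i.e. $\max_{u\in\calU}w(u)\ge a$. Symmetrically, at $\theta=1$ we need $\min_{u\in\calU} w(u)\le 1-b$.

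I expect this bookkeeping to be the main delicate point. I would handle it by stating the (mild) support convention that $w$ attains values $\ge a$ and $\le 1-b$ on $\calU$, under which the characterization \eqref{eq:sharpset} is exact at the endpoints too. Otherwise I would note that the endpoints are excluded in the degenerate case.

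\emph{Interval structure.} Define
\[
g_1(\theta)=W^+(\theta)-a\theta,\qquad g_2(\theta)=W^+(\theta)-\wbar+(1-b)(1-\theta).
\]
Then $\Theta_w=\{\theta\in[0,1]:g_1(\theta)\ge0,\ g_2(\theta)\ge0\}$.

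$W^+$ is the optimal value of a maximization LP whose right-hand side depends linearly on $\theta$, so it is concave on $[0,1]$. Concretely, the greedy fill makes $W^+$ piecewise linear with slopes equal to the values of $w$ taken in decreasing order. In particular $W^+$ is continuous on the closed interval $[0,1]$, including its endpoints.

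Hence $g_1$ and $g_2$ are continuous and concave. Their upper level sets $\{g_i\ge0\}$ are closed convex subsets of $[0,1]$, i.e. closed intervals or empty. The intersection of two such sets is again a possibly empty closed interval, which is the stated shape of $\Theta_w$.
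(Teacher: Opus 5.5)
Your proof is correct and takes the paper's route. You reduce feasibility to an LP in the sub-measure $h_1=\theta q_{1,U}$, observe that both calibration constraints are lower bounds on $\sum_u w(u)h_1(u)$, compare that functional with $W^+(\theta)$, and obtain the closed-interval shape from concavity; you also make explicit the continuity of $W^+$ needed for closedness.

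Your endpoint discussion is a legitimate refinement that the paper's proof glosses over. The paper imposes \eqref{eq:scorecal} in the multiplied-through joint-mass form $\sum_u w(u)h_1(u)\ge a\theta$, which is vacuous at $\theta=0$ (and symmetrically at $\theta=1$). Under the conditional-component definition \eqref{eq:idset}, one additionally needs $\max_{u\in\calU}w(u)\ge a$ (resp.\ $\min_{u\in\calU}w(u)\le 1-b$), which is exactly your support convention. In the degenerate cases the true set is empty, so it is still a (possibly empty) closed interval.
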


\begin{proof}
Work with the joint mass $h_1(u)=\Prob(X^*=1,U=u)$ and $h_0=p_U-h_1$. Feasibility of a prevalence $\theta$ is equivalent to the existence of $h_1$ with
\[
0\le h_1\le p_U,\qquad \sum_u h_1(u)=\theta,\qquad \sum_u w(u)h_1(u)\ge a\theta,\qquad \sum_u w(u)h_0(u)\le(1-b)(1-\theta),
\]
where the third inequality is $\E[w\,\1\{X^*=1\}]\ge a\,\Prob(X^*=1)$, i.e.\ \eqref{eq:scorecal} for $\mu_1$, and the fourth is \eqref{eq:scorecal} for $\mu_0$. The fourth inequality rewrites as $\sum_u w(u)h_1(u)\ge\wbar-(1-b)(1-\theta)$. The feasible set for $h_1$ given the first two constraints is a nonempty polytope (for $\theta\in[0,1]$), and the achievable values of the linear functional $s=\sum_u w(u)h_1(u)$ over that polytope form a closed interval $[W^-(\theta),W^+(\theta)]$. Hence a feasible $h_1$ exists if and only if $W^+(\theta)\ge\max\{a\theta,\ \wbar-(1-b)(1-\theta)\}$. Both $\theta\mapsto a\theta$ and $\theta\mapsto\wbar-(1-b)(1-\theta)$ are affine and $W^+$ is concave, so each constraint defines an interval and $\Theta_w$ is their intersection.
\end{proof}

\begin{corollary}[Relation to the linear bounds; exact sharpness for binary scores]\label{cor:sharp}
\textup{(i)} $\Theta_w$ is contained in the interval \eqref{eq:scorebounds}.
\textup{(ii)} If $w(U)\in\{0,1\}$ almost surely, then $\Theta_w$ equals \eqref{eq:scorebounds}: the linear bounds are exactly sharp for binary scores (in particular, for all event indicators and threshold classifiers).
\textup{(iii)} The interval \eqref{eq:scorebounds} is sharp in the class of observed laws with score mean $\wbar$: for every $\theta$ in \eqref{eq:scorebounds} there exists a law of $w(U)$ with mean $\wbar$ and a feasible decomposition supporting $\theta$. Hence \eqref{eq:scorebounds} cannot be improved using $\wbar$ alone.
\end{corollary}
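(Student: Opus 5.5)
The plan is to read all three parts off the characterization in Theorem~\ref{thm:sharp}, using the elementary bound $W^+(t)\le\min\{t,\wbar\}$ recorded after \eqref{eq:Wplus}.

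For (i), take $\theta\in\Theta_w$. The first constraint gives $a\theta\le W^+(\theta)\le\wbar$, so $\theta\le\wbar/a$. The second gives $\wbar-(1-b)(1-\theta)\le W^+(\theta)\le\theta$. Rearranging, $\wbar-1+b\le b\theta$, which is exactly the lower bound $(\wbar+b-1)/b$. Together with $\theta\in[0,1]$, this puts $\theta$ in \eqref{eq:scorebounds}.

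For (ii), assume $w\in\{0,1\}$ and compute $W^+$ from the greedy fill. The values with $w=1$ carry total mass $\wbar$. Hence $W^+(t)=t$ for $t\le\wbar$ and $W^+(t)=\wbar$ for $t\ge\wbar$, i.e.\ $W^+(t)=\min\{t,\wbar\}$ exactly. Now take $\theta$ in \eqref{eq:scorebounds} and check the two constraints of \eqref{eq:sharpset} in turn:
\begin{itemize}
\item $a\theta\le\min\{\theta,\wbar\}$ holds because $a\le1$ and $\theta\le\wbar/a$.
\item $\wbar-(1-b)(1-\theta)\le\theta$ is the lower bound rearranged.
\item $\wbar-(1-b)(1-\theta)\le\wbar$ holds trivially.
\end{itemize}
So equality holds, and combined with (i) this gives $\Theta_w=$ \eqref{eq:scorebounds}.

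For (iii), fix $\wbar$. The plan is to use the binary-score law $w(U)\sim\mathrm{Bernoulli}(\wbar)$, for instance taking $U$ with two support points $u_1,u_0$, $w(u_1)=1$, $w(u_0)=0$, $p_U(u_1)=\wbar$. This law has mean $\wbar$, and by (ii) its sharp set equals the full interval \eqref{eq:scorebounds}. Hence every $\theta$ in the interval is supported by a feasible $h_1$ from the proof of Theorem~\ref{thm:sharp}. No valid bound depending only on $\wbar$ can exclude such a $\theta$, because it would have to be wrong for this law.

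The only delicate point is (iii) when the score range must realize every $\wbar\in[0,1]$, including the endpoints. For $\wbar\in\{0,1\}$ the law is degenerate, and I will check that the formulas still hold; they do, since $W^+(t)=\min\{t,\wbar\}$ continues to hold. I will also note that when \eqref{eq:scorebounds} is empty, both sides are empty, and the statement holds vacuously.
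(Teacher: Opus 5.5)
Your proposal is correct and follows the paper's proof essentially step for step: (i) from $W^+(\theta)\le\min\{\theta,\wbar\}$, (ii) from $W^+(t)=\min\{t,\wbar\}$ for binary scores, and (iii) via the $\mathrm{Bernoulli}(\wbar)$ score law. In (ii), your direct check of both constraints replaces the paper's split into the regimes $\theta\le\wbar$ and $\theta>\wbar$, with the same effect. One small note: the interval \eqref{eq:scorebounds} is never empty, since $(\wbar+b-1)/b\le\wbar\le\wbar/a$, so your closing caveat about the empty case is unnecessary.
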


\begin{proof}
(i) From $W^+(\theta)\le\wbar$ and $W^+(\theta)\ge a\theta$ we get $\theta\le\wbar/a$; from $W^+(\theta)\le\theta$ and $W^+(\theta)\ge\wbar-(1-b)(1-\theta)$ we get $b\theta\ge\wbar+b-1$.
(ii) For binary $w$, $W^+(\theta)=\min\{\theta,\wbar\}$. If $\theta\le\wbar$ the first constraint in \eqref{eq:sharpset} holds since $a\le1$, and the second reduces to $\theta\ge(\wbar+b-1)/b$; if $\theta>\wbar$ the second holds automatically and the first reduces to $\theta\le\wbar/a$. Since $(\wbar+b-1)/b\le\wbar\le\wbar/a$, the union of the two regimes is exactly \eqref{eq:scorebounds}.
(iii) Given $\wbar$, take $w(U)$ binary with $\Prob(w(U)=1)=\wbar$ and apply (ii).
\end{proof}

\begin{remark}[Interpretation]
The gap between $\Theta_w$ and \eqref{eq:scorebounds} is an exact measure of the information in the \emph{shape} of the score distribution beyond its mean. For binary scores the shape carries nothing extra; for graded scores (e.g.\ $w=S/M$ with $M$ large) the rearrangement constraint $W^+(\theta)\ge a\theta$ can bind strictly earlier than $\theta=\wbar/a$, tightening the upper bound. $W^+$ is computed by sorting, so the sharp set costs no more than the linear bounds in practice (Section~\ref{sec:computation}). Emptiness of $\Theta_w$ is a specification rejection of the calibration constants $(a,b)$ at the observed law.
\end{remark}

\subsection{Calibrated events: validity and sharpness}

Let $A\subseteq\calU$ be a high-confidence positive event and $B\subseteq\calU$ a high-confidence negative event.

\begin{proposition}[Posterior event calibration; sharp]\label{prop:event}
If $\Prob(X^*=1\mid U\in A)\ge\rho$ then $\theta\ge\rho\,\Prob(U\in A)$. If $\Prob(X^*=0\mid U\in B)\ge\lambda$ then $\theta\le 1-\lambda\,\Prob(U\in B)$. Each one-sided bound is sharp under the corresponding single event-calibration restriction.
\end{proposition}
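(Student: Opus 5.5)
The plan is to prove validity directly by conditioning, and then prove sharpness by an explicit construction of the latent component laws, in the style of the $h_1$-parametrization used for Theorem~\ref{thm:sharp}. Write $P_A=\Prob(U\in A)$ and $P_B=\Prob(U\in B)$.

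\emph{Validity.} For the positive event,
\[
\theta=\Prob(X^*=1)\ge\Prob(X^*=1,U\in A)=\Prob(X^*=1\mid U\in A)\,P_A\ge\rho P_A .
\]
If $P_A=0$ the bound is trivially $\theta\ge0$. Symmetrically, $1-\theta\ge\Prob(X^*=0,U\in B)\ge\lambda P_B$, which gives $\theta\le1-\lambda P_B$.

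\emph{Sharpness of the lower bound.} Here the only restriction is $\Prob(X^*=1\mid U\in A)\ge\rho$, so the target identified set is $[\rho P_A,1]$. For each $\theta$ in this set I will exhibit a joint mass $h_1(u)=\Prob(X^*=1,U=u)$ with $0\le h_1\le p_U$, total mass $\sum_u h_1(u)=\theta$, and $\sum_{u\in A}h_1(u)\ge\rho P_A$; then $q_{1,U}=h_1/\theta$ and $q_{0,U}=(p_U-h_1)/(1-\theta)$ solve \eqref{eq:mixture}.
\begin{enumerate}
\item Start with $h_1=\rho\,p_U$ on $A$ and $h_1=0$ off $A$. This gives mass $\rho P_A$ and meets the constraint with equality.
\item To reach a larger $\theta$, raise $h_1$ toward $p_U$, first on $A^c$ (mass $1-P_A$) and then on the rest of $A$. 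Raising mass on $A$ only increases the conditional probability, so the constraint is preserved.
\item The construction is continuous and monotone in the added mass, running from $\rho P_A$ up to $1$, so every $\theta\in[\rho P_A,1]$ is attained.
\end{enumerate}
At the endpoints $\theta\in\{0,1\}$ one of the components is undefined and can be set arbitrarily; that case carries no content. The upper bound is handled identically, with the roles of $X^*=1$ and $X^*=0$ swapped: put $h_0=\lambda p_U$ on $B$ and fill the remainder.

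The main obstacle is interpretive rather than technical: stating precisely what ``sharp'' means. The claim concerns each restriction imposed alone, so the identified set is exactly $[\rho P_A,1]$ (respectively $[0,1-\lambda P_B]$), not a statement about the two restrictions imposed jointly. The degenerate cases $P_A=0$ and $\rho P_A=0$ are consistent with this reading. I would also remark that joint imposition need not be sharp when $A\cap B\neq\emptyset$, because the two constructions then compete for the same mass. Under that reading, the continuity-filling step above is routine.
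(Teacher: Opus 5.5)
Your proposal is correct and follows the paper's proof: validity via $\theta\ge\Prob(X^*=1,U\in A)\ge\rho\,\Prob(U\in A)$, and sharpness via the same construction $h_1=\rho\,p_U$ on $A$ and $h_1=0$ off $A$ (symmetrically, $h_0=\lambda\,p_U$ on $B$). Your filling step, which shows every $\theta\in[\rho P_A,1]$ is feasible, and your remark that imposing both restrictions jointly can fail to be sharp when $A\cap B\neq\emptyset$ go slightly beyond what the paper proves, and both are correct.
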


\begin{proof}
The lower bound follows from
\[
\theta\ge\Prob(X^*=1,U\in A)=\Prob(X^*=1\mid U\in A)\Prob(U\in A)\ge\rho\Prob(U\in A),
\]
and the upper bound is analogous. For sharpness of the lower bound, under the single restriction involving $A$, set $h_1(u)=\rho p_U(u)$ for $u\in A$ and $h_1(u)=0$ otherwise, with $h_0=p_U-h_1$. Then $0\le h_1\le p_U$, the calibration constraint holds with equality whenever $\Prob(U\in A)>0$, and $\theta=\sum_u h_1(u)=\rho\Prob(U\in A)$. The upper bound is symmetric, assigning mass $h_0(u)=\lambda p_U(u)$ on $B$ and $h_0(u)=0$ off $B$.
\end{proof}

\begin{proposition}[Wrong-state event errors; sharp]\label{prop:eventerror}
If $\Prob(U\in A\mid X^*=0)\le\alpha_A$ with $\alpha_A\in[0,1)$, then
\[
\theta\ \ge\ \max\Bigl\{0,\ \frac{\Prob(U\in A)-\alpha_A}{1-\alpha_A}\Bigr\}.
\]
If $\Prob(U\in B\mid X^*=1)\le\alpha_B$ with $\alpha_B\in[0,1)$, then
\[
\theta\ \le\ \min\Bigl\{1,\ \frac{1-\Prob(U\in B)}{1-\alpha_B}\Bigr\}.
\]
Each one-sided bound is sharp under the corresponding single wrong-state error restriction.
\end{proposition}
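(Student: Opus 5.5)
The argument runs parallel to Proposition~\ref{prop:event}: validity comes from the law of total probability, and sharpness from an explicit joint mass $h_1(u)=\Prob(X^*=1,U=u)$ with $h_0=p_U-h_1$.

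\emph{Validity.} For the lower bound, write
\[
\Prob(U\in A)=\theta\,\Prob(U\in A\mid X^*=1)+(1-\theta)\Prob(U\in A\mid X^*=0).
\]
Bound the first conditional probability by $1$ and the second by $\alpha_A$. This gives $\Prob(U\in A)\le\theta+(1-\theta)\alpha_A$. Since $\alpha_A<1$, rearranging yields $\theta\ge(\Prob(U\in A)-\alpha_A)/(1-\alpha_A)$, and we then intersect with $[0,1]$. The upper bound is symmetric: bound $\Prob(U\in B\mid X^*=0)\le1$ and $\Prob(U\in B\mid X^*=1)\le\alpha_B$. This gives $\Prob(U\in B)\le(1-\theta)+\theta\alpha_B$, i.e.\ $\theta\le(1-\Prob(U\in B))/(1-\alpha_B)$.

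\emph{Sharpness of the lower bound.} Let $P_A=\Prob(U\in A)$ and let $\theta_L$ be the bound. The extremal decomposition puts $X^*=1$ on all of $A$'s excess mass and nowhere off $A$.

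If $P_A\le\alpha_A$, then $\theta_L=0$. Take $h_1\equiv0$, so $q_0=p_U$, and the restriction reads $P_A\le\alpha_A$, which holds.

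If $P_A>\alpha_A$, set $h_1(u)=c\,p_U(u)$ for $u\in A$ and $h_1(u)=0$ off $A$, with $c\in[0,1]$ to be chosen. Then $\theta=cP_A$ and $\Prob(U\in A,X^*=0)=(1-c)P_A$. The restriction requires $(1-c)P_A\le\alpha_A(1-cP_A)$. Solving at equality gives
\[
c=\frac{P_A-\alpha_A}{P_A(1-\alpha_A)}\in(0,1],
\]
so $\theta=(P_A-\alpha_A)/(1-\alpha_A)=\theta_L$. Two checks remain. First, $0\le h_1\le p_U$ holds because $c\le1$, which follows from $P_A\le1$. Second, $1-\theta>0$ unless $P_A=1$. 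In the edge case $P_A=1$ we get $\theta=1$, and the conditional restriction given $X^*=0$ is vacuous (or can be taken to hold with any $q_0$), so it is satisfied.

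\emph{Sharpness of the upper bound.} This mirrors the lower bound with the roles of the states swapped: put $h_0=c'p_U$ on $B$ and $h_0=0$ off $B$, and choose $c'$ so that $\Prob(U\in B,X^*=1)=\alpha_B\,\theta$.

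The only delicate step is this boundary bookkeeping: the degenerate cases $\theta\in\{0,1\}$, where a conditional probability is undefined, and confirming that $c\in[0,1]$. Everything else is a one-line algebraic rearrangement.
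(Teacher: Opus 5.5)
Your proof is correct and follows the paper's argument. Validity comes from $\Prob(U\in A)\le\theta+(1-\theta)\alpha_A$, and sharpness from placing positive-state mass proportionally to $p_U$ inside $A$ only: your $c=\theta_L/P_A$ is exactly the paper's allocation of total mass $\theta_A$ on $A$, and the mirror construction handles $B$. Your check that $c\le 1$ and your treatment of the edge case $P_A=1$ (where the restriction given $X^*=0$ is vacuous) fill in details the paper leaves implicit.
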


\begin{proof}
Let $p_A=\Prob(U\in A)$. Since $p_A\le(1-\theta)\alpha_A+\theta$, rearrangement gives the lower bound. To see sharpness, if $p_A\le\alpha_A$, the bound is zero and is attained by $\theta=0$ with $q_0=p_U$. If $p_A>\alpha_A$, set
\[
\theta_A=\frac{p_A-\alpha_A}{1-\alpha_A}.
\]
Allocate positive-state joint mass only inside $A$, proportionally to $p_U$ on $A$, with total mass $\theta_A$, and set $h_0=p_U-h_1$. Then $h_0(A)=p_A-\theta_A=\alpha_A(1-\theta_A)$, so $\Prob(U\in A\mid X^*=0)=\alpha_A$ and the lower bound is attained. The upper bound is symmetric: if $\Prob(U\in B)\le\alpha_B$, take $\theta=1$; otherwise set $\theta_B=(1-\Prob(U\in B))/(1-\alpha_B)$, allocate all mass outside $B$ to the positive state and allocate additional positive-state mass inside $B$ so that $\Prob(U\in B\mid X^*=1)=\alpha_B$.
\end{proof}

\begin{remark}[One source of identification]\label{rem:onesource}
Propositions~\ref{prop:score}--\ref{prop:eventerror} are the maintained identifying content of the paper. Reporter-specific accuracy, threshold classifiers, relaxed support, unanimity, weighted votes, and matrix-agreement rules are not separate assumptions; they are obtained by choosing different $U$, $w$, $A$, and $B$. Proposition~\ref{prop:eventerror} is the one-sided binary-score analogue of Proposition~\ref{prop:score}: the lower bound uses $w=\1\{U\in A\}$ and the specificity-type constraint $\E[1-w(U)\mid X^*=0]\ge1-\alpha_A$, while the upper bound uses $w=\1\{U\in B\}$ and the false-negative constraint $\E[w(U)\mid X^*=1]\le\alpha_B$. The direct construction above gives sharpness.
\end{remark}

\subsection{Optimized calibrated events and multiplicity}\label{sec:optimized}

Rich summaries admit many candidate high-agreement events. Rather than choosing one arbitrarily, the analyst can pre-specify a finite class and calibrate the whole class on a validation sample, treating event selection explicitly as a multiple-testing problem.

Let $\calA^+$ be a finite class of positive events $A\subseteq\calU$ (row-threshold events, column-pattern events, trusted-model events, weighted-score threshold events). Suppose validation data deliver simultaneous lower confidence bounds $\widehat\rho_A$ such that, with probability at least $1-\alpha$,
\begin{equation}
\Prob(X^*=1\mid U\in A)\ \ge\ \widehat\rho_A\quad\text{for every }A\in\calA^+.
\label{eq:simulcal}
\end{equation}
Then with the same probability all lower bounds hold simultaneously, so
\begin{equation}
\theta\ \ge\ \max_{A\in\calA^+}\ \widehat\rho_A\,\Prob(U\in A),
\label{eq:optlower}
\end{equation}
and symmetrically $\theta\le\min_{B\in\calA^-}\{1-\widehat\lambda_B\Prob(U\in B)\}$ for a simultaneously calibrated negative class $\calA^-$.

\begin{remark}[Pre-specification and honest calibration]
The event class must be fixed before examining the main unlabeled sample, or the calibration step must account for selection. Simultaneity in \eqref{eq:simulcal} can be obtained by Bonferroni corrections, split-sample validation, or conformal-style calibration. The same discipline applies to the calibration constants $(a_j,b_j)$ used anywhere in the paper: for honest inference they should be \emph{lower confidence bounds} estimated on a sample (or sample split) disjoint from the one used to compute observed rates. This adds no structural assumption; it only ensures the validity statements survive the search over events.
\end{remark}

\subsection{Optional sensitivity restrictions}\label{sec:optional-sensitivity}

The paper's baseline identification comes from calibrated scores and calibrated events. Two additional restrictions are useful as sensitivity analyses, but they should not be presented as maintained assumptions unless independently justified. For a count or vote score $C\in\{0,\ldots,J\}$, directional asymmetry with parameter $\gamma>0$ imposes
\[
\E[C\mid X^*=0]\le \gamma\,\E[J-C\mid X^*=1],
\]
so false-positive votes are bounded relative to false-negative votes. For a one-LLM count $S\in\{0,\ldots,M\}$ with observed mean $\bar s=\E[S]$, anchored separation with tolerance $\varepsilon>0$ imposes
\[
\E[S\mid X^*=1]\ge \bar s+\varepsilon,
\qquad
\E[S\mid X^*=0]\le \bar s-\varepsilon,
\]
which implies $\theta\ge \varepsilon/(\varepsilon+M-\bar s)$ and $\theta\le \bar s/(\bar s+\varepsilon)$. These restrictions are informative because they impose cross-state separation, not because they use replication by itself.

\begin{table}[htbp!]
\centering\small
\caption{Restrictions and their role in the paper}
\label{tab:hierarchy}
\begin{tabular}{p{3.0cm}p{5.4cm}p{6.0cm}}
\toprule
Role & Restriction or object & Status \\
\midrule
Regularity & First-order stochastic dominance (FOSD), monotone likelihood ratio (MLR), weak mean ordering & Plausible but non-identifying; optional background restrictions; definitions appear in Section~\ref{sec:framework}. \\
\addlinespace
Main identifying content & Calibrated score $w(U)$ & Core result (Proposition~\ref{prop:score}, Theorem~\ref{thm:sharp}); calibrated from validation data or external evidence. \\
\addlinespace
Main identifying content & Calibrated event $A\subseteq\calU$ or $B\subseteq\calU$ & Core result (Propositions~\ref{prop:event}--\ref{prop:eventerror}); includes high- and low-agreement regions; sharp. \\
\addlinespace
Design structure & Exchangeability of prompt labels, when built into the design & Not an accuracy assumption; justifies lossless reduction from $R$ to the orbit statistic $N$ (Theorem~\ref{thm:lossless}). \\
\addlinespace
Special cases & Reporter-specific accuracy, threshold rules, relaxed support, unanimity, weighted votes & Instances of score or event calibration, not separate assumptions. \\
\addlinespace
Optional sensitivity & Directional asymmetry, anchored separation, conditional independence & Not baseline; define explicitly, use only when substantively justified, and report separately. \\
\addlinespace
Diagnostics & Coarsening loss, row/column influence, dependence audits, minimum tolerance & Empirical transparency tools; add no identifying power. \\
\bottomrule
\end{tabular}
\end{table}

\section{Design I: one LLM, repeated binary questions}\label{sec:design1}

A single LLM is used repeatedly to measure the same latent binary truth. For one item, write the repeated reports as $R_1,\dots,R_M\in\{0,1\}$. If the repeated prompts are exchangeable probes of the same truth, their labels carry no structural content and the natural observable is the count
\[
S=\sum_{m=1}^M R_m\in\{0,\dots,M\},\qquad
p(s)=(1-\theta)q_0(s)+\theta q_1(s).
\]
No conditional independence is imposed across prompts: repeated prompts from one model may share the same systematic errors.

\begin{remark}[When the count is appropriate]
The count is appropriate only if the repetitions are exchangeable measurements of the same latent truth. If prompts ask substantively different questions, there is no single scalar $X^*$ behind all responses. If prompt identities have known reliability differences, keep the prompt-response vector rather than counting.
\end{remark}

\noindent The count design has two natural calibrated scores: $w_1(S)=S/M$ and $w_k(S)=\1\{S\ge k\}$.

\paragraph{Average repeated-prompt accuracy.} With $w=S/M$ in Proposition~\ref{prop:score} and $\bar r=\E[S/M]$,
\begin{equation}
\max\Bigl\{0,\frac{\bar r+b-1}{b}\Bigr\}\le\theta\le\min\Bigl\{1,\frac{\bar r}{a}\Bigr\}.
\label{eq:design1mean}
\end{equation}
Because $S/M$ is a graded score, Theorem~\ref{thm:sharp} applies with content: the sharp set computed from $W^+$ can be strictly inside \eqref{eq:design1mean}, and is obtained by sorting the support of $S$.

\paragraph{Threshold accuracy.} For a threshold $k$, let $D_k=\1\{S\ge k\}$ and $p_k^+=\Prob(S\ge k)$. If validation data support $\Prob(D_k=1\mid X^*=1)\ge a_k$ and $\Prob(D_k=0\mid X^*=0)\ge b_k$, then
\[
\max\Bigl\{0,\frac{p_k^++b_k-1}{b_k}\Bigr\}\le\theta\le\min\Bigl\{1,\frac{p_k^+}{a_k}\Bigr\},
\]
and by Corollary~\ref{cor:sharp}(ii) these bounds are exactly sharp.

\paragraph{High- and low-count events.} With $A=\{S\ge k\}$ and $B=\{S\le\ell\}$, Proposition~\ref{prop:event} gives $\theta\ge\rho_k\Prob(S\ge k)$ and $\theta\le 1-\lambda_\ell\Prob(S\le\ell)$; Proposition~\ref{prop:eventerror} gives the tail-error bounds
\[
\theta\ \ge\ \max\Bigl\{0,\frac{\Prob(S\ge k)-\alpha_{0k}}{1-\alpha_{0k}}\Bigr\},\qquad
\theta\ \le\ \min\Bigl\{1,\frac{1-\Prob(S\le\ell)}{1-\alpha_{1\ell}}\Bigr\}.
\]
Relaxed support is the special case $k=M$, $\ell=0$: $q_0(M)\le\alpha_0$, $q_1(0)\le\alpha_1$. Exact support ($\alpha_0=\alpha_1=0$) gives $p(M)\le\theta\le 1-p(0)$ but is rarely credible for LLMs; relaxed, validation-calibrated tail bounds are usually preferable.

\paragraph{Implication.} Design I is useful only to the extent that the count distribution can be calibrated. Report a small number of validation-calibrated score or event bounds---ideally the sharp set of Theorem~\ref{thm:sharp} for the graded score---rather than many sensitivity restrictions.

\section{Design II: many named LLMs, one binary question each}\label{sec:design2}

$J$ named LLMs each answer the same binary question once. The response vector is $Y=(Y_1,\dots,Y_J)\in\{0,1\}^J$ with law $\pi(y)$ and class-conditionals $f_z(y)$ satisfying
$\pi(y)=(1-\theta)f_0(y)+\theta f_1(y)$.
No conditional independence is assumed across named LLMs. The vote count $\sum_j Y_j$ is a coarsening, not the primitive object.

The named-vector design matters because it permits calibrated scores and events that use model identity: the single-reporter score $w_j(Y)=Y_j$, the weighted score $w(Y)=\sum_j c_jY_j$ with $c_j\ge0$, $\sum_j c_j=1$, and events $A\subseteq\{0,1\}^J$ encoding agreement by a trusted subset rather than simple majority.

\paragraph{Reporter-specific accuracy.} Suppose validation data provide
\begin{equation}
\Prob(Y_j=1\mid X^*=1)\ge a_j,\qquad \Prob(Y_j=0\mid X^*=0)\ge b_j,\qquad j=1,\dots,J,
\label{eq:reportercal}
\end{equation}
and let $m_j=\Prob(Y_j=1)$. Applying Proposition~\ref{prop:score} to each $w_j(Y)=Y_j$ and intersecting,
\begin{equation}
\max_{1\le j\le J}\max\Bigl\{0,\frac{m_j+b_j-1}{b_j}\Bigr\}\ \le\ \theta\ \le\ \min_{1\le j\le J}\min\Bigl\{1,\frac{m_j}{a_j}\Bigr\}.
\label{eq:reporterbounds}
\end{equation}
Each one-reporter bound is sharp (binary score); the intersection is valid and is the sharp set based on the marginals alone. Using the joint law $\pi$ with the restrictions \eqref{eq:reportercal} simultaneously can tighten further; this is the LP of Section~\ref{sec:computation}.

\paragraph{Named events.} If $A\subseteq\{0,1\}^J$ satisfies $\Prob(X^*=1\mid Y\in A)\ge\rho_A$ then $\theta\ge\rho_A\pi(A)$; if $B$ satisfies $\Prob(X^*=0\mid Y\in B)\ge\lambda_B$ then $\theta\le 1-\lambda_B\pi(B)$. Relaxed unanimity is only the special case $A=\{\mathbf 1_J\}$, $B=\{\mathbf 0_J\}$ and should not be the default event unless validation evidence supports it.

\paragraph{The exact cost of counting votes.} If only the vote count $S=\sum_jY_j$ is stored, the reporter-specific marginals and named events are not recoverable; the count identifies only $\bar m=\frac1J\E[S]=\frac1J\sum_j m_j$. With averaged calibration constants $\beta_1=\frac1J\sum_j a_j$, $\beta_0=\frac1J\sum_j b_j$, the count-only analogue is
\begin{equation}
\max\Bigl\{0,\frac{\bar m+\beta_0-1}{\beta_0}\Bigr\}\ \le\ \theta\ \le\ \min\Bigl\{1,\frac{\bar m}{\beta_1}\Bigr\},
\label{eq:countbounds}
\end{equation}
which is generally strictly weaker than \eqref{eq:reporterbounds}; Section~\ref{sec:empirical} quantifies the gap in the toxicity application. The same loss applies to weighted scores and trusted-subset events.

\paragraph{Implication.} Analyze Design II with the named vector. Count-based analysis is a robustness check or a fallback when only counts were stored. The identifying content should come from reporter-specific or named-event calibration, not from generic independence assumptions.

\section{Design III: many named LLMs, repeated questions each}\label{sec:design3}

The third design is a two-way measurement panel: for one item, observe $R=(R_{jm})\in\{0,1\}^{J\times M}$ with mixture law \eqref{eq:matrixmixture}. No conditional independence is assumed across rows, columns, or cells. Design III is valuable not because it makes independence credible, but because it records \emph{where} agreement occurs.

\subsection{Summaries and coarsenings}

Useful lower-dimensional summaries: the model-count vector $T=(T_1,\dots,T_J)$, $T_j=\sum_m R_{jm}$; the prompt-count vector $S=(S_1,\dots,S_M)$, $S_m=\sum_j R_{jm}$; the total count $C=\sum_{j,m}R_{jm}$; and, when prompt labels are exchangeable but model labels are not, the \emph{column-pattern histogram}
\begin{equation}
N_y(R)=\sum_{m=1}^M\1\{R_{\cdot m}=y\},\qquad y\in\{0,1\}^J,\qquad N(R)=(N_y(R):y\in\{0,1\}^J),
\label{eq:histogram}
\end{equation}
which counts the prompts on which the named response pattern equals $y$ (with $J=3$, $N_{110}$ counts prompts where models 1 and 2 say one and model 3 says zero). The histogram preserves cross-model agreement within prompts while discarding prompt labels, and refines the model-count vector via $T_j=\sum_y y_jN_y$. The coarsening hierarchy is
\[
R\ \longrightarrow\ N\ \longrightarrow\ T\ \longrightarrow\ C,\qquad R\ \longrightarrow\ S\ \longrightarrow\ C,
\]
with support sizes $2^{JM}$, $\binom{M+2^J-1}{2^J-1}$, $(M+1)^J$, $(J+1)^M$, and $JM+1$ respectively. The full matrix distinguishes patterns with the same total count: one weak model saying yes on every prompt is not equivalent to several trusted models each saying yes repeatedly.

\subsection{Truth-sufficient reductions of the response matrix}

Which coarsenings lose information about $X^*$? The answer is a likelihood-ratio invariance condition.

Let $\Omega=\{0,1\}^{J\times M}$, let $U=h(R)$ be any finite summary with fibers $F_u=\{r:h(r)=u\}$, and let $q_z(u)=\sum_{r\in F_u}f_z(r)$.

\begin{proposition}[Truth-sufficient coarsenings]\label{prop:sufficiency}
Assume $0<\Prob(X^*=1)<1$. The following are equivalent, up to null sets.
\begin{enumerate}[leftmargin=1.6em,itemsep=1pt]
\item $U$ is sufficient for the latent truth: $\Prob(X^*=1\mid R=r)=\Prob(X^*=1\mid U=h(r))$ for all $r$ with positive probability.
\item The residual matrix information given the summary is independent of the truth: $\Prob(R=r\mid U=u,X^*=1)=\Prob(R=r\mid U=u,X^*=0)$ for $r\in F_u$.
\item The likelihood ratio $L(r)=f_1(r)/f_0(r)$ is constant on each fiber $F_u$ (usual conventions for zeros).
\item There is a Markov kernel $K(r\mid u)$, independent of $z$, with $f_z(r)=q_z(h(r))\,K(r\mid h(r))$ for $z=0,1$.
\end{enumerate}
When these conditions hold, $R$ and $U$ contain the same information about $X^*$; when they fail, the coarsening discards information about the latent truth.
\end{proposition}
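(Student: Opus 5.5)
I will prove the equivalences by showing (1)$\Leftrightarrow$(3), (3)$\Leftrightarrow$(2) and (2)$\Leftrightarrow$(4), all through Bayes' rule applied to the mixture \eqref{eq:matrixmixture}. Throughout, restrict attention to $r$ with $\pi_R(r)>0$ and to fibers with $p_U(u)>0$. For such $r$, $\pi_R(r)=(1-\theta)f_0(r)+\theta f_1(r)$ and $0<\theta<1$ mean that $f_0(r)$ and $f_1(r)$ cannot both vanish. Adopt the convention $L(r)\in[0,\infty]$, where $L(r)=\infty$ when $f_0(r)=0<f_1(r)$.

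For (1)$\Leftrightarrow$(3), write the posterior as
\[
\Prob(X^*=1\mid R=r)=\frac{\theta f_1(r)}{(1-\theta)f_0(r)+\theta f_1(r)}=\phi(L(r)),\qquad \phi(\ell)=\frac{\theta\ell}{1-\theta+\theta\ell}.
\]
The map $\phi$ is strictly increasing from $[0,\infty]$ onto $[0,1]$ because $0<\theta<1$. Aggregating over the fiber gives
\[
\Prob(X^*=1\mid U=u)=\frac{\theta q_1(u)}{(1-\theta)q_0(u)+\theta q_1(u)}=\phi\bigl(q_1(u)/q_0(u)\bigr).
\]
Hence (1) says that $\phi(L(r))=\phi(q_1(u)/q_0(u))$ for all positive-probability $r\in F_u$. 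By injectivity this is equivalent to $L(r)=q_1(u)/q_0(u)$ on the fiber, which implies (3). Conversely, suppose $L\equiv c_u$ on $F_u$. Then $f_1(r)=c_uf_0(r)$ for every $r\in F_u$, and summing over the fiber gives $q_1(u)=c_uq_0(u)$ (with the obvious modification if $c_u=\infty$, where $f_0\equiv0$ on the fiber). So the fiber ratio equals $c_u$, and (1) follows.

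For (3)$\Leftrightarrow$(2), note that for $r\in F_u$ we have $\Prob(R=r\mid U=u,X^*=z)=f_z(r)/q_z(u)$ whenever $q_z(u)>0$. If $L\equiv c_u$ on the fiber, these conditional laws coincide for $z=0,1$. Conversely, equality of the normalized laws gives $f_1(r)/f_0(r)=q_1(u)/q_0(u)$ on the fiber. The main obstacle is the zero cases. When $q_0(u)=0$, the conditional law given $X^*=0$ is undefined, and this is exactly what ``up to null sets'' absorbs: on such a fiber (2) is vacuous and (3) holds with $L\equiv\infty$. I would handle this by a short case split over whether $q_0(u)$ and $q_1(u)$ are zero.

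For (2)$\Leftrightarrow$(4), set $K(r\mid u)$ equal to the common conditional law from (2). On a fiber where only one $q_z(u)$ is positive, take $K$ to be that state's conditional law; on a null fiber, take any distribution on $F_u$. Then $f_z(r)=q_z(h(r))K(r\mid h(r))$ holds by the definition of conditional probability. Conversely, given (4), sum over $F_u$ to see that $K(\cdot\mid u)$ sums to one on the fiber wherever $q_z(u)>0$, so $K$ is the conditional law for each $z$, which is (2).

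The closing sentence follows from (4). The pair $(X^*,R)$ can be generated by first drawing $(X^*,U)$ and then drawing $R\sim K(\cdot\mid U)$ independently of $X^*$, so $R$ is a garbling of $U$ that carries no further information about $X^*$. When (3) fails, some fiber contains two points with distinct likelihood ratios. By strict monotonicity of $\phi$, those points have distinct posteriors, so $U$ merges states that $R$ distinguishes.
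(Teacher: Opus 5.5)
Your proof is correct and follows the paper's route. You use Bayes' rule for (1)$\Leftrightarrow$(3), equality of the fiber-normalized conditionals $f_z(r)/q_z(u)$ for (2)$\Leftrightarrow$(3), and the common conditional law as the kernel $K$ for (2)$\Leftrightarrow$(4). Your additions only fill in details the paper leaves implicit: the strictly monotone map $\phi$ (which gives the posterior given $U$ explicitly), the case split on zero-mass fibers, and the garbling argument for the closing sentence.
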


\begin{proof}
(2)$\Leftrightarrow$(4): take $K(r\mid u)=\Prob(R=r\mid U=u,X^*=z)$, which is independent of $z$ exactly when (2) holds. (2) is equivalent to $f_1(r)/q_1(u)=f_0(r)/q_0(u)$ on $F_u$, i.e.\ to $f_1(r)/f_0(r)=q_1(u)/q_0(u)$ constant on $F_u$, which is (3). By Bayes' rule, $\Prob(X^*=1\mid R=r)=\theta f_1(r)/\{(1-\theta)f_0(r)+\theta f_1(r)\}$ depends on $r$ only through $h(r)$ iff $f_1/f_0$ does, giving (1)$\Leftrightarrow$(3).
\end{proof}

\begin{remark}[Common matrix summaries]
$C$ is truth-sufficient only if $f_1/f_0$ is constant across matrices with the same total count; $T$ only if constant across matrices with the same row-count vector; $N$ only if constant across matrices with the same pattern histogram. Prompt exchangeability of both $f_0$ and $f_1$ is sufficient for the last property but not necessary: likelihood-ratio invariance within prompt-permutation orbits is enough.
\end{remark}

\subsection{Prompt exchangeability and a lossless matrix reduction}\label{sec:lossless}

Some reductions are not losses: they remove labels that have no design meaning. In Design III, prompt labels may be exchangeable even when model labels are not.

Let $S_M$ be the group of permutations of prompt labels; for $\sigma\in S_M$, $(\sigma r)_{jm}=r_{j,\sigma(m)}$. The histogram $N(R)$ indexes the orbits of this action: two matrices have the same $N$ iff one is obtained from the other by permuting prompt columns. The action extends to distributions by $(\sigma q)(r)=q(\sigma^{-1}r)$.

The reduction from $R$ to $N$ is lossless only when prompt exchangeability is part of the maintained design. To avoid ambiguity, let $\Theta_R^{\mathrm{ex}}(\pi_R;\calA_R)$ denote the full-matrix identified set when admissible component distributions $q_0,q_1$ are required to be prompt-exchangeable in addition to satisfying the matrix-level restrictions $\calA_R$. The required correspondence between matrix-level and histogram-level restrictions is explicit below.

\begin{assumption}[Restriction compatibility]\label{ass:compat}
\textup{(i)} The matrix-level restrictions $\calA_R$ are invariant to prompt relabeling: if $(q_0,q_1)\in\calA_R$ and the $q_z$ are prompt-exchangeable, then relabeling prompts does not change the truth of the restrictions.
\textup{(ii)} The histogram-level restriction set is exactly the projection of the exchangeable part of $\calA_R$:
\[
\calA_N=\Bigl\{(q_0^N,q_1^N):\ (q_0,q_1)\in\calA_R,\ q_z(\sigma r)=q_z(r)\ \forall\sigma\in S_M,\ q_z^N\ \text{is the law of }N\text{ under }q_z\Bigr\}.
\]
\end{assumption}

\begin{theorem}[Lossless reduction under prompt exchangeability]\label{thm:lossless}
Suppose prompt exchangeability is a maintained design restriction, the observed population law $\pi_R$ is prompt-exchangeable, and Assumption~\ref{ass:compat} holds. Then the full matrix and the column-pattern histogram give the same identified set:
\[
\Theta_R^{\mathrm{ex}}(\pi_R;\calA_R)\ =\ \Theta_N(p_N;\calA_N).
\]
\end{theorem}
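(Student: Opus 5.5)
We prove the two inclusions separately, using the fact that $N$ is a maximal invariant for the prompt-permutation action: each fiber $F_n=\{r:N(r)=n\}$ is exactly one $S_M$-orbit.

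\emph{Inclusion $\Theta_R^{\mathrm{ex}}\subseteq\Theta_N$.} Take $\theta\in\Theta_R^{\mathrm{ex}}(\pi_R;\calA_R)$ with a witnessing pair $(q_0,q_1)$. The pair is prompt-exchangeable, satisfies $\calA_R$, and satisfies \eqref{eq:matrixmixture}. Push each $q_z$ forward through $N$ to obtain $q_z^N$. Summing \eqref{eq:matrixmixture} over the fibers of $N$ gives \eqref{eq:mixture} for $U=N$ with observed law $p_N$, exactly as in the proof of Theorem~\ref{thm:coarsening}. By Assumption~\ref{ass:compat}(ii), $(q_0^N,q_1^N)\in\calA_N$, since it is the image of an exchangeable element of $\calA_R$. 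This direction is therefore Theorem~\ref{thm:coarsening}, with compatibility supplied by the definition of $\calA_N$.

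\emph{Inclusion $\Theta_N\subseteq\Theta_R^{\mathrm{ex}}$.} This is the substantive direction. Take $\theta\in\Theta_N(p_N;\calA_N)$ with witnesses $(q_0^N,q_1^N)\in\calA_N$. By definition of $\calA_N$, there exists an exchangeable pair $(\tilde q_0,\tilde q_1)\in\calA_R$ whose $N$-laws are $q_z^N$. The remaining task is to show that this pair reproduces $\pi_R$ and not merely $p_N$.

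Exchangeability pins each $\tilde q_z$ down from its $N$-law. An exchangeable law is constant on orbits, so
\[
\tilde q_z(r)=\frac{q_z^N(N(r))}{|F_{N(r)}|},
\]
where $|F_n|$ is the multinomial coefficient counting column arrangements. Likewise, the prompt-exchangeable observed law satisfies $\pi_R(r)=p_N(N(r))/|F_{N(r)}|$. Dividing the histogram mixture equation by $|F_n|$ then gives
\[
(1-\theta)\tilde q_0(r)+\theta\tilde q_1(r)=\pi_R(r)\quad\text{for every } r.
\]
Hence $(\theta,\tilde q_0,\tilde q_1)$ is feasible for the full-matrix problem with exchangeable components, and $\theta\in\Theta_R^{\mathrm{ex}}$.

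\emph{Main obstacle.} The delicate point is that the lifted pair supplied by $\calA_N$ must be the uniform-on-orbit lift. This requires showing that exchangeability forces uniformity on orbits, which holds because $S_M$ acts transitively on each fiber $F_n$. It also requires making sure that Assumption~\ref{ass:compat}(ii) hands back the restrictions for that same pair. It does, because exchangeability leaves no other choice of lift. Assumption~\ref{ass:compat}(i) plays a supporting role: it makes $\calA_R$ well defined on orbit-averaged laws. If one preferred to start from an arbitrary feasible matrix-level pair, one could symmetrize by averaging $\sigma q_z$ over $\sigma\in S_M$. That route would need convexity of $\calA_R$, which is not assumed, so I would avoid it and rely on the exchangeable lift directly.
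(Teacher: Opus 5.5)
Your proof is correct and follows the paper's argument. One inclusion projects along $N$ exactly as in Theorem~\ref{thm:coarsening}; the other uses the uniform-on-orbit lift together with the fact that the exchangeable $\pi_R$ is the uniform lift of $p_N$. Your explicit observation that the exchangeable pair supplied by Assumption~\ref{ass:compat}(ii) must coincide with the uniform lift (since each fiber of $N$ is a single $S_M$-orbit) spells out the step the paper compresses into ``Assumption~\ref{ass:compat}(ii) ensures that the lifted pair satisfies $\calA_R$.''
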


\begin{proof}
($\subseteq$) Let $(\theta,q_0,q_1)$ be feasible for $\Theta_R^{\mathrm{ex}}(\pi_R;\calA_R)$. Projecting each $q_z$ along $N$ gives $(q_0^N,q_1^N)$ satisfying the histogram mixture equation because projection commutes with mixing. Assumption~\ref{ass:compat}(ii) gives $(q_0^N,q_1^N)\in\calA_N$.

($\supseteq$) Let $(\theta,q_0^N,q_1^N)$ be feasible for $\Theta_N(p_N;\calA_N)$. For each histogram value $n$, lift $q_z^N(n)$ uniformly over the finite orbit $\{r:N(r)=n\}$. The lifted $q_z$ are prompt-exchangeable and project back to $q_z^N$. Because $\pi_R$ is prompt-exchangeable, it is itself the uniform orbit lift of $p_N$, so the lifted distributions satisfy the matrix mixture equation for $\pi_R$. Assumption~\ref{ass:compat}(ii) ensures that the lifted pair satisfies $\calA_R$. Thus the same $\theta$ is feasible for $\Theta_R^{\mathrm{ex}}(\pi_R;\calA_R)$.
\end{proof}

\begin{remark}[No independence is used]
The theorem uses only a design symmetry and matched restrictions. It does not assume prompts are independent conditional on $X^*$, nor that cells are weakly correlated; entries of $R$ may be arbitrarily dependent within each latent state.
\end{remark}

\begin{remark}[Why compatibility matters]\label{rem:compatfails}
Both directions of Assumption~\ref{ass:compat} have bite. If the analyst imposes \emph{additional} restrictions after reducing to $N$ (so $\calA_N$ is strictly smaller than the projection), then only $\Theta_N\subseteq\Theta_R^{\mathrm{ex}}$ is guaranteed. Conversely, a matrix-level restriction that is \emph{not} prompt-invariant---for example, a calibrated accuracy bound for prompt $m=1$ only---cannot be expressed through $N$ at all; dropping it in the reduction gives only $\Theta_R^{\mathrm{ex}}\subseteq\Theta_N$. Equality is a statement about matched restriction classes, not about the statistic alone.
\end{remark}

\begin{remark}[General orbit reduction]
The argument applies to any finite group of design symmetries: if $G$ acts on the matrix space, admissible $q_z$ are $G$-invariant, and the restrictions are $G$-invariant and matched as in Assumption~\ref{ass:compat}, the orbit statistic gives the same identified set as the full matrix. Prompt exchangeability gives $N$; independent prompt relabeling within each model gives $T$; full exchangeability of all cells gives $C$, but that last symmetry is rarely credible for named LLMs.
\end{remark}

\subsection{Matrix scores and matrix events}

Design III should not be analyzed as a large vote count unless both dimensions are intentionally treated as exchangeable. The natural calibrated objects are matrix scores and matrix events.

A weighted matrix score has the form
\[
w(R)=\frac{\sum_{j}\sum_{m}c_{jm}R_{jm}}{\sum_{j}\sum_{m}c_{jm}},\qquad c_{jm}\ge0,\ \ \sum_{j,m}c_{jm}>0,
\]
with special cases $w=T_j/M$, $w=S_m/J$, and $w=C/(JM)$; if \eqref{eq:scorecal} holds for $w$, Proposition~\ref{prop:score} and Theorem~\ref{thm:sharp} apply. Under prompt exchangeability a score may be written as a function of $N$, e.g.\ placing weight on column patterns where trusted models agree.

Matrix events encode repeated agreement across both dimensions, e.g.
\[
A^+(K,L)=\Bigl\{R:\ \sum_{j=1}^J\1\{T_j\ge L\}\ \ge K\Bigr\},\qquad
A^-(K,L)=\Bigl\{R:\ \sum_{j=1}^J\1\{T_j\le L\}\ \ge K\Bigr\},
\]
the events that at least $K$ named models each produce at least (at most) $L$ positive prompt responses. The histogram also suggests events invisible from $T$: with a trusted subset $J_0\subseteq\{1,\dots,J\}$,
\[
\sum_{y:\,y_j=1\ \forall j\in J_0} N_y\ \ge\ L
\]
requires the trusted models to agree positively on at least $L$ prompts, regardless of the other models, while remaining invariant to prompt relabeling. Calibrated predictive values or wrong-state errors for any of these events feed Propositions~\ref{prop:event}--\ref{prop:eventerror}; searches over event classes use Section~\ref{sec:optimized}.

\subsection{Model-specific and prompt-specific bounds; structured calibration}

The model-count vector is a tractable compromise when $R$ or $N$ is too large. If for each named model $j$
\[
\E[T_j/M\mid X^*=1]\ge a_j,\qquad \E[(M-T_j)/M\mid X^*=0]\ge b_j,
\]
then with $r_j=\E[T_j/M]$,
\[
\max_{1\le j\le J}\max\Bigl\{0,\frac{r_j+b_j-1}{b_j}\Bigr\}\ \le\ \theta\ \le\ \min_{1\le j\le J}\min\Bigl\{1,\frac{r_j}{a_j}\Bigr\},
\]
and symmetrically for calibrated prompt-count scores $S_m/J$. Model-specific bounds suit stable, calibratable model error profiles; prompt-specific bounds suit stable prompt families. If prompt labels are exchangeable but within-prompt agreement matters, $N$ is preferable to $T$ because it retains more agreement geometry.

Full cell-level calibration of every pair $(j,m)$ may be too parameter-rich. A practical compromise groups models and prompts: with model groups $g(j)$ and prompt families $h(m)$, calibrate
$\Prob(R_{jm}=1\mid X^*=1)\ge a_{g(j),h(m)}$ and $\Prob(R_{jm}=0\mid X^*=0)\ge b_{g(j),h(m)}$,
with the grouping fixed before analysis or justified by validation data.

\subsection{Diagnostics unique to the matrix design}

Design III allows diagnostics that add no identifying assumptions.

\paragraph{Coarsening loss.} For $U\in\{R,N,T,S,C\}$ and compatible restrictions, report widths of identified sets and their differences, e.g.
\[
\mathrm{Loss}(N\to T)=\mathrm{wid}(\Theta_T)-\mathrm{wid}(\Theta_N),\qquad
\mathrm{Loss}(T\to C)=\mathrm{wid}(\Theta_C)-\mathrm{wid}(\Theta_T).
\]
A large loss from $N$ to $T$ means within-prompt cross-model agreement matters; from $T$ to $C$, that model identity matters. If prompt exchangeability is credible and restrictions are invariant and matched, Theorem~\ref{thm:lossless} predicts no loss from $R$ to $N$ at the population level---a checkable implication of the design symmetry.

\paragraph{Row and column influence.} Let $\Theta^{(-j)}$ and $\Theta^{(-m)}$ be the identified sets after dropping model $j$ or prompt $m$. Large movements in either bound show that conclusions hinge on a particular model or prompt; this is often more informative than another structural assumption.

\paragraph{Dependence and redundancy.} On a validation sample, estimate within-model dependence $\mathrm{Corr}(R_{jm},R_{jm'}\mid X^*=z)$ and across-model dependence $\mathrm{Corr}(R_{jm},R_{j'm}\mid X^*=z)$. High correlations indicate redundant measurements; low correlations indicate nonredundant variation. These are diagnostics, not independence assumptions.

\section{Computation and inference}\label{sec:computation}

\paragraph{Linear programming.} For any finite summary $U$, fixed-$\theta$ feasibility is a linear program. With joint masses $h_z(u)=\Prob(X^*=z,U=u)$:
\[
h_0(u)+h_1(u)=p_U(u),\qquad \sum_u h_1(u)=\theta,\qquad h_z\ge0,
\]
and conditional restrictions become linear after multiplying through, e.g.\ $\E[w(U)\mid X^*=1]\ge a$ becomes $\sum_u w(u)h_1(u)\ge a\sum_u h_1(u)$. Sharp bounds under the baseline linear score and event restrictions are two LPs: minimize and maximize $\sum_u h_1(u)$. For fixed $\theta$, FOSD restrictions are linear in the conditional component probabilities and can be included in fixed-$\theta$ feasibility checks; exact MLR restrictions are bilinear in the components and should be handled only through explicit relaxations or nonlinear optimization. Prompt exchangeability is implemented either by using $N$ directly or by orbit-equality constraints on matrix probabilities.

\paragraph{The sharp score set by sorting.} $W^+(t)$ in \eqref{eq:Wplus} is computed greedily: order support points by descending $w(u)$ and fill mass to total $t$. $W^+$ is piecewise linear and concave, so $\Theta_w$ in \eqref{eq:sharpset} is found by intersecting a concave piecewise-linear function with two affine functions---no LP solver needed.

\paragraph{Sampling uncertainty.} When $p_U$ is estimated from $N$ items, replace mixture equalities by bands $|(1-\theta)q_0(u)+\theta q_1(u)-\widehat p_U(u)|\le\Delta_u$. A simple finite-sample choice is the Hoeffding-union bound
\begin{equation}
\varepsilon_N(\alpha)=\sqrt{\frac{\log(2|\calU|/\alpha)}{2N}},
\label{eq:hoeffding}
\end{equation}
with $\Delta_u=\varepsilon_N(\alpha)$ for every $u$, yielding a conservative outer confidence set; less conservative alternatives use the multinomial bootstrap, empirical Bernstein bands, or moment-inequality methods. For optimized event bounds, sampling uncertainty enters twice---through $\widehat p_U$ in the main sample and through validation estimates of predictive values---and validity requires the simultaneous calibration \eqref{eq:simulcal}.

\paragraph{Specification testing by minimum tolerance.} Let $\calQ$ be the set of distributions over $\calU$ implied by the maintained restrictions for some $\theta$, and define
\[
\Delta^*=\min_{q\in\calQ}\|\widehat p_U-q\|_\infty,
\]
the minimum $\ell_\infty$ tolerance restoring feasibility. If the restrictions are correct at the population law $p_U^0$, then $\Delta^*\le\|\widehat p_U-p_U^0\|_\infty$, so a finite-sample test rejects at level $\alpha$ if $\Delta^*>\varepsilon_N(\alpha)$, and
\[
\Delta_0\in\bigl[\max\{0,\Delta^*-\varepsilon_N(\alpha)\},\ \Delta^*+\varepsilon_N(\alpha)\bigr]
\]
is a confidence interval for the population misspecification distance. The test detects out-of-distribution collapse: if an LLM panel becomes uninformative, restrictions calibrated in distribution may become infeasible. Emptiness of the sharp score set $\Theta_w$ is the same idea specialized to a single calibrated score.

\section{Simulation: count-based Beta-Binomial experiment}\label{sec:simulation}

Set $M=5$ and $\theta_0=0.65$, with class-conditional counts $S\mid X^*=1\sim\mathrm{BetaBin}(5,4,1.5)$ and $S\mid X^*=0\sim\mathrm{BetaBin}(5,1.5,4)$; the observed histogram is $p=(1-\theta_0)q_0+\theta_0q_1$. The design has $\E[S\mid X^*=0]=1.36$, $\E[S\mid X^*=1]=3.64$, average repeated-prompt accuracy $0.727$, and observed mean $\E[S]=2.84$, so $\wbar=\E[S/M]=0.568$ for the graded score $w=S/M$. Without restrictions, or under FOSD/MLR/weak mean ordering alone, the identified set is $[0,1]$ because the degenerate decomposition $q_0=q_1=p$ remains feasible (Proposition~\ref{prop:degeneracy}); Table~\ref{tab:simsets} records this once as a benchmark and then isolates the paper's main computational comparison: the mean-only linear bounds \eqref{eq:scorebounds} versus the sharp rearrangement set of Theorem~\ref{thm:sharp}, computed from $W^+$ by sorting the six support points of $S$.

\begin{table}[ht]
\centering\small
\caption{Simulation DGP: class-conditional and mixture PMFs}
\label{tab:simdgp}
\begin{tabular}{cccc}
\toprule
$s$ & $q_0(s)$ & $q_1(s)$ & $p(s)$ \\
\midrule
0 & 0.310 & 0.015 & 0.118 \\
1 & 0.291 & 0.055 & 0.137 \\
2 & 0.208 & 0.121 & 0.152 \\
3 & 0.121 & 0.208 & 0.178 \\
4 & 0.055 & 0.291 & 0.208 \\
5 & 0.015 & 0.310 & 0.207 \\
\bottomrule
\end{tabular}
\end{table}

\begin{table}[ht]
\centering\small
\caption{Simulation identified sets for $\theta$: linear (mean-only) versus sharp (full-law) score bounds}
\label{tab:simsets}
\setlength{\tabcolsep}{4pt}
\begin{tabular}{llcccc}
\toprule
Restriction & Uses & $\theta_L$ & $\theta_U$ & Width & Covers $\theta_0$ \\
\midrule
None, or FOSD/MLR/mean order & -- & $\approx0$ & $\approx1$ & 1.000 & Yes \\
Exact support, $p(5)\le\theta\le1-p(0)$ & tail cells & 0.207 & 0.882 & 0.675 & Yes \\
Linear score \eqref{eq:scorebounds}, $a=b=0.60$ & mean $\wbar$ & 0.280 & 0.947 & 0.667 & Yes \\
Sharp score (Thm.~\ref{thm:sharp}), $a=b=0.60$ & law of $w(S)$ & 0.317 & 0.947 & 0.630 & Yes \\
Linear score \eqref{eq:scorebounds}, $a=b=0.70$ & mean $\wbar$ & 0.383 & 0.812 & 0.429 & Yes \\
Sharp score (Thm.~\ref{thm:sharp}), $a=b=0.70$ & law of $w(S)$ & 0.479 & 0.784 & 0.305 & Yes \\
Sharp score + exact support, $a=b=0.60$ & law + support & 0.317 & 0.882 & 0.565 & Yes \\
Sharp score + exact support, $a=b=0.70$ & law + support & 0.479 & 0.784 & 0.305 & Yes \\
\bottomrule
\end{tabular}

\smallskip
\footnotesize Notes: the score is $w=S/M$ with $\wbar=0.568$. Sharp sets are computed from the rearrangement function $W^+$ of \eqref{eq:Wplus} by sorting; the combined sharp-score-plus-support rows are computed by the LP of Section~\ref{sec:computation} and agree with the closed form of Theorem~\ref{thm:sharp} when support does not bind.
\end{table}

The table mirrors the paper's identification message in miniature. Weak ordering restrictions leave $[0,1]$ untouched, so they are recorded in a single benchmark row. Calibration is what moves the set, and the shape of the score distribution carries identifying content beyond its mean, exactly as Theorem~\ref{thm:sharp} predicts: at $a=b=0.60$ the rearrangement constraint $W^+(\theta)\ge \wbar-(1-b)(1-\theta)$ binds before the linear lower bound, raising $\theta_L$ from $0.280$ to $0.317$; at $a=b=0.70$ the sharp set $[0.479,0.784]$ removes $29\%$ of the linear interval's width ($0.429$ to $0.305$), tightening both endpoints. Because $w(S)$ is a graded score on six support points, this gain is obtained by a sort and costs nothing computationally. Exact support tightens only the weakly calibrated case ($a=b=0.60$, where it caps $\theta_U$ at $0.882$) and is redundant once calibration is strong; this ordering---calibration first, support as a benchmark---is the recommended reporting style of Section~\ref{sec:recommendations}.

\section{Empirical illustration: toxicity classification}\label{sec:empirical}

This section revisits the toxicity-response classification task studied by \cite{ChengEtAl2024SoftLabel}. The task is to bound the prevalence of toxic responses when the analyst observes binary labels from LLM annotators and, for a validation sample, expert-adjudicated truth (which gives us a benchmark to allow us to validate our bounds).

In particular, the data involve $J=3$ named LLM annotators---GPT-4, GPT-4 Turbo, and Claude-2---and three human annotators. This is a Design II setting.  Following the taxonomy of this paper, we now report the named-vector analysis in three increasingly disciplined steps---plug-in marginal bounds, the full named-vector LP on the joint law of $Y=(Y_1,Y_2,Y_3)$, and an honest version that replaces plug-in calibration constants by one-sided lower confidence bounds---and we tag every bound with the response object that must be stored to compute it.

\begin{table}[ht]
\centering\small
\caption{Annotator characteristics on validation set, $N=1{,}000$, $\theta_0=0.650$}
\label{tab:annotators}
\begin{tabular}{lcccc}
\toprule
Annotator & Positive rate & Sensitivity & Specificity & Accuracy \\
\midrule
Human 1 & 0.536 & 0.786 & 0.929 & 0.836 \\
Human 2 & 0.627 & 0.857 & 0.800 & 0.837 \\
Human 3 & 0.601 & 0.852 & 0.866 & 0.857 \\
GPT-4 & 0.562 & 0.795 & 0.871 & 0.822 \\
GPT-4 Turbo & 0.483 & 0.662 & 0.849 & 0.727 \\
Claude-2 & 0.272 & 0.345 & 0.863 & 0.526 \\
LLM exchangeable average & -- & 0.601 & 0.861 & -- \\
\bottomrule
\end{tabular}
\end{table}

Table~\ref{tab:annotators} shows why named-vector analysis matters: GPT-4 is much more accurate than Claude-2, and all three LLMs have higher specificity than sensitivity. A count-only analysis collapses this heterogeneity into an exchangeable average. The validation class-conditionals also show that exact support is empirically violated with only three LLMs: among truly toxic items, 14.2\% receive zero positive LLM votes; among truly non-toxic items, 4.3\% receive unanimous positive votes. Exact support can be reported as a benchmark, but relaxed support or validation-calibrated score bounds are more credible.

Table~\ref{tab:histograms} records the observable objects that the bounds below are built from. The top panel gives the count histograms---$\widehat p(s)$ with the validation class-conditionals $\widehat q_z(s)$ for $N=1{,}000$, and the count-only training histogram for $N=28{,}194$---while the lower panel reports the full named-vector law $\widehat\pi(y)$ over $\{0,1\}^3$. The two are not interchangeable: the count is the coarsening $s=\sum_j y_j$ of $\widehat\pi$, so patterns that disagree on \emph{which} model flagged the item are merged. Here $(1,1,0)$, where GPT-4 and GPT-4 Turbo flag and Claude-2 abstains, has mass $0.250$, whereas the equal-count pattern $(0,1,1)$ has mass only $0.010$; counting pools them into $\widehat p(2)=0.295$ and discards exactly the reporter identity that the named-vector bounds will exploit.

\begin{table}[ht]
\centering\small
\caption{Observed response objects, $J=3$ LLMs: vote-count histograms and named-vector law}
\label{tab:histograms}
\begin{tabular}{ccccc@{\qquad}cc}
\toprule
\multicolumn{5}{c}{Validation set, $N=1{,}000$} & \multicolumn{2}{c}{Training set, $N=28{,}194$} \\
\cmidrule(r){1-5}\cmidrule(l){6-7}
$s$ & $\widehat p(s)$ & $\widehat q_0(s)$ & $\widehat q_1(s)$ & & $s$ & Train $\widehat p(s)$ \\
\midrule
0 & 0.348 & 0.731 & 0.142 & & 0 & 0.218 \\
1 & 0.172 & 0.163 & 0.177 & & 1 & 0.162 \\
2 & 0.295 & 0.063 & 0.420 & & 2 & 0.177 \\
3 & 0.185 & 0.043 & 0.262 & & 3 & 0.444 \\
$\bar p$ & 1.317 & -- & -- & & $\bar p$ & 1.847 \\
$\E[S\mid X^*]$ & -- & 0.417 & 1.802 & & & \\
\bottomrule
\end{tabular}

\medskip
\setlength{\tabcolsep}{4pt}
\begin{tabular}{lcccccccc}
\multicolumn{9}{l}{\emph{Named-vector law $\widehat\pi(y)$, $y=(y_{\text{GPT-4}},y_{\text{Turbo}},y_{\text{Cl-2}})$, validation:}}\\
\toprule
$y$ & 000 & 001 & 010 & 011 & 100 & 101 & 110 & 111 \\
$\widehat\pi(y)$ & 0.348 & 0.042 & 0.038 & 0.010 & 0.092 & 0.035 & 0.250 & 0.185 \\
\bottomrule
\end{tabular}

\smallskip
\footnotesize Notes: the count histogram $\widehat p(s)$ is the coarsening of $\widehat\pi(y)$ along $s=\sum_j y_j$; the patterns $(1,1,0)$ and, e.g., $(0,1,1)$ are merged by counting even though they involve reporters of very different accuracy. For the training set only the count histogram was stored.
\end{table}

\subsection{Named-vector bounds: plug-in benchmark and honest calibration}\label{sec:namedempirical}

Table~\ref{tab:namedbounds} reports reporter-specific bounds \eqref{eq:reporterbounds} in two versions. Panel~A is the \emph{plug-in benchmark}: calibration constants $(a_j,b_j)$ are the validation sensitivities and specificities of Table~\ref{tab:annotators}, estimated on the same sample as the observed rates $m_j$, so the panel illustrates the identification logic but is not honest inference. Panel~B is the \emph{honest calibration} version recommended in Section~\ref{sec:optimized}: $(a_j,b_j)$ are replaced by one-sided Clopper--Pearson lower confidence bounds $(a_j^L,b_j^L)$ at Bonferroni level $\alpha/6$ with $\alpha=0.05$, so all six constraints hold simultaneously with probability at least $0.95$ and the resulting bounds are valid by Proposition~\ref{prop:score}.

\begin{table}[ht]
\centering\small
\caption{Reporter-specific bounds, validation set: plug-in benchmark versus honest calibration}
\label{tab:namedbounds}
\begin{tabular}{lccccc}
\toprule
Score & $m_j$ & $a_j$ & $b_j$ & $\theta_L$ & $\theta_U$ \\
\midrule
\multicolumn{6}{l}{\emph{Panel A: plug-in benchmark (constants and rates from the same sample)}} \\
GPT-4 ($w=Y_1$) & 0.562 & 0.795 & 0.871 & 0.497 & 0.707 \\
GPT-4 Turbo ($w=Y_2$) & 0.483 & 0.662 & 0.849 & 0.391 & 0.730 \\
Claude-2 ($w=Y_3$) & 0.272 & 0.345 & 0.863 & 0.156 & 0.789 \\
Intersection & & & & \textbf{0.497} & \textbf{0.707} \\
\midrule
\multicolumn{6}{l}{\emph{Panel B: honest calibration (one-sided 95\% simultaneous lower confidence bounds)}} \\
GPT-4 ($w=Y_1$) & 0.562 & 0.755 & 0.823 & 0.468 & 0.744 \\
GPT-4 Turbo ($w=Y_2$) & 0.483 & 0.615 & 0.797 & 0.351 & 0.785 \\
Claude-2 ($w=Y_3$) & 0.272 & 0.300 & 0.813 & 0.105 & 0.906 \\
Intersection & & & & \textbf{0.468} & \textbf{0.744} \\
\bottomrule
\end{tabular}

\smallskip
\footnotesize Notes: bounds use \eqref{eq:reporterbounds}. In Panel B, $a_j^L$ and $b_j^L$ are exact one-sided Clopper--Pearson lower confidence bounds for sensitivity ($n_1=650$) and specificity ($n_0=350$) at level $0.05/6$ each. Both panels cover $\theta_0=0.650$; honest calibration widens the intersection from $[0.497,0.707]$ to $[0.468,0.744]$ (width $0.209$ to $0.277$)---the price of validity is modest. Fully honest inference would additionally compute $(a_j^L,b_j^L)$ on a split disjoint from the sample used for $m_j$ and $\widehat\pi$, and add sampling bands \eqref{eq:hoeffding}.
\end{table}

\subsection{The key comparison: what storing richer objects buys}

Table~\ref{tab:keycomparison} is the section's central exhibit. It reports four identified sets for the same population, ordered by the richness of the stored response object, including the full named-vector LP of Section~\ref{sec:computation}, which imposes the mixture equation on the joint law $\widehat\pi(y)$ over $\{0,1\}^3$ together with all reporter-specific calibration restrictions simultaneously.

\begin{table}[ht]
\centering\small
\caption{Key comparison: identified sets for $\theta$ by stored response object, validation set}
\label{tab:keycomparison}
\setlength{\tabcolsep}{5pt}
\begin{tabular}{llccc}
\toprule
Bound & Object stored & $\theta_L$ & $\theta_U$ & Width \\
\midrule
Count coarsening, eq.~\eqref{eq:countbounds} & count $S$ & 0.348 & 0.731 & 0.383 \\
Reporter-specific marginal intersection & named vector $Y$ (marginals) & 0.497 & 0.707 & 0.209 \\
Full named-vector LP & joint law of $Y$ & 0.497 & 0.707 & 0.209 \\
Honest full named-vector LP & joint law of $Y$ & 0.468 & 0.744 & 0.277 \\
\bottomrule
\end{tabular}

\smallskip
\footnotesize Notes: all rows cover $\theta_0=0.650$. Row 1 uses the exchangeable-average constants $(\beta_1,\beta_0)=(0.601,0.861)$ with $\bar m=0.439$. Rows 2--3 use the plug-in constants of Table~\ref{tab:namedbounds}, Panel A; row 4 uses the honest constants of Panel B. The honest count-coarsened analogue of row 1 (averaging the $a_j^L,b_j^L$) is $[0.308,0.788]$, width $0.480$: the storage gain survives honest calibration, $0.277$ versus $0.480$.
\end{table}

Three facts stand out. First, the cost of counting: moving from the named vector to the count widens the set from $[0.497,0.707]$ to $[0.348,0.731]$---the width nearly doubles, from $0.209$ to $0.383$---purely because counting discards reporter identity, quantifying Theorem~\ref{thm:coarsening}. Second, the full LP on the joint law coincides with the marginal intersection to four decimals. This is informative rather than disappointing: it certifies that, given reporter-specific calibration alone, the marginal intersection is already sharp at this observed law, so no further information can be extracted from these restrictions; the joint law is nevertheless the object to store, because only it makes the sharpness check possible and because pattern-level calibrated events (trusted-subset agreement, Section~\ref{sec:optimized}) require it whenever validation evidence supports them. Third, honest calibration costs about $0.07$ of width relative to the plug-in benchmark but preserves the storage ranking: the honest named-vector set (width $0.277$) remains far narrower than the honest count set (width $0.480$).

The binding reporter on both sides is GPT-4, the most accurate model; the influence diagnostic of Section~\ref{sec:design3} applies verbatim: dropping GPT-4 widens the plug-in intersection to the Turbo bounds $[0.391,0.730]$. For the training set ($N=28{,}194$), only the count histogram was stored, so every named-vector row of Table~\ref{tab:keycomparison} is unavailable by construction: with $\bar m=1.847/3=0.616$ and the exchangeable-average constants, \eqref{eq:countbounds} gives $[0.554,\,1.000]$. The storage choice has real consequences: on the training set the count bound is $[0.554,1.000][0.554, 1.000]
[0.554,1.000]$, whose upper endpoint is the trivial value 
1, so counting leaves the prevalence bounded only from below. Had the named vector been stored, both endpoints would be informative.

\section{Extension: regression on a latent label}\label{sec:regression}

So far the target has been the scalar prevalence $\theta=\Prob(X^*=1)$. In many applications $X^*$ is not the final object but a latent \emph{regressor}: the analyst wants the partial association between an observed outcome and the latent label, holding covariates fixed. This section shows that the calibration bounds feed directly into such a regression, with prevalence recovered as the special case of a constant outcome.

For each item let $(V,Z,U)$ be observed, where $V\in\R$ is a scalar outcome, $Z\in\R^{d}$ a vector of observed covariates, and $U=g(R)$ the LLM summary of Sections~\ref{sec:framework}--\ref{sec:design3}; $X^*\in\{0,1\}$ remains latent. (We write the outcome as $V$ to avoid collision with the named-LLM vector $Y$ of Section~\ref{sec:design2}.) Stack the regressors as $W=(1,Z',X^*)'\in\R^{d+2}$ and consider the best linear predictor (BLP) coefficient
\[
\beta \;=\; \big(\E[WW']\big)^{-1}\E[WV],
\]
assuming $\E[WW']$ is nonsingular. The coefficient on $X^*$, denoted $\gamma$, is typically the parameter of interest; if $\E[V\mid Z,X^*]$ is linear, $\beta$ is the conditional-expectation coefficient.

\paragraph{Reduction to latent cross-moments.} Because $(V,Z)$ are observed and $X^*$ is binary (so $(X^*)^2=X^*$), every entry of $\E[WW']$ and $\E[WV]$ is identified from the data except the blocks that pair $X^*$ with an observed variable:
\[
\theta=\E[X^*],\qquad \E[X^*Z]\in\R^{d},\qquad \E[X^*V]\in\R.
\]
Collect these in $\mu=(\theta,\E[X^*Z]',\E[X^*V])'$. Then $\beta=\Phi(\mu;m_{\mathrm{obs}})$ for a known map $\Phi$ that is rational in $\mu$---a matrix inverse times a vector, both affine in $\mu$---where $m_{\mathrm{obs}}$ collects the observed moments. Identifying $\beta$ thus reduces to identifying $\mu$.

\paragraph{Bounding the latent cross-moments.} Since $(U,Z,V)$ are jointly observed, write the unknown attachment of the latent label to the data as
\[
\eta(u,z,v)=\Prob(X^*=1\mid U=u,Z=z,V=v)\in[0,1].
\]
Every latent cross-moment is a \emph{linear} functional of $\eta$: for any observed $h$,
\[
\E[X^*\,h(U,Z,V)]=\E[h(U,Z,V)\,\eta(U,Z,V)].
\]

\begin{lemma}[Calibrated bounds on latent moments]\label{lem:latentmoments}
Let $h$ be any observed, bounded function. Over all $\eta$ consistent with the observed law of $(U,Z,V)$ and with the maintained calibration restrictions of Section~\ref{sec:calibration}, the cross-moment $\E[X^*h]$ ranges over a closed interval whose endpoints solve the linear programs
\[
\min_{\eta}\ /\ \max_{\eta}\ \E[h(U,Z,V)\,\eta(U,Z,V)]
\quad\text{s.t.}\quad 0\le\eta\le1,\ \text{marginal consistency},\ \text{calibration}.
\]
Taking $h\equiv1$ returns the prevalence bounds of Section~\ref{sec:calibration} exactly; taking $h\in\{Z_1,\dots,Z_d,V\}$ bounds the remaining components of $\mu$.
\end{lemma}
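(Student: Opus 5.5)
The plan is to recast the problem as a linear program in the unknown attachment $\eta$ and read off the conclusions from convexity and compactness. Throughout, work with the observed law $P$ of $(U,Z,V)$. Where $(Z,V)$ has infinite support, treat $\eta$ as an element of $L^\infty(P)$ with $0\le\eta\le1$; in the finite-support case, $\eta$ is simply a vector indexed by support points.

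First I will verify that every restriction is linear in $\eta$. Marginal consistency needs no separate constraint: given $\eta$, the joint law of $(X^*,U,Z,V)$ is $P(du,dz,dv)\,\eta^{x}(1-\eta)^{1-x}$, which reproduces $P$ automatically. Every $\eta\in[0,1]$ therefore corresponds to a joint law consistent with the data, and conversely.

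The calibration restrictions of Section~\ref{sec:calibration} become linear after clearing denominators, exactly as in Section~\ref{sec:computation}:
\begin{itemize}
\item $\E[w(U)\mid X^*=1]\ge a$ becomes $\E[w(U)\eta]\ge a\E[\eta]$;
\item $\E[1-w(U)\mid X^*=0]\ge b$ becomes $\E[(1-w(U))(1-\eta)]\ge b\E[1-\eta]$;
\item the event restrictions become $\E[\1\{U\in A\}\eta]\ge\rho\Prob(U\in A)$ and $\E[\1\{U\in A\}(1-\eta)]\le\alpha_A\E[1-\eta]$.
\end{itemize}
The feasible set $\mathcal{H}$ is thus the intersection of the box $\{0\le\eta\le1\}$ with finitely many closed half-spaces defined by bounded integrands. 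Hence $\mathcal{H}$ is convex. The objective $\eta\mapsto\E[h\eta]$ is linear.

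Next I will show that the range of $\E[h\eta]$ over $\mathcal{H}$ is a closed interval. Convexity of $\mathcal{H}$ plus linearity of the objective makes the image convex, hence an interval. For closedness, the box is weak-$*$ compact in $L^\infty(P)=L^1(P)^*$ by Banach--Alaoglu. Each constraint and the objective are integrals against bounded, hence $L^1(P)$, functions, so they are weak-$*$ continuous. $\mathcal{H}$ is therefore weak-$*$ compact, and its continuous image is compact. If $\mathcal{H}$ is empty, the statement holds vacuously: the LP is infeasible, which is the specification rejection of the Remark after Corollary~\ref{cor:sharp}. In the finite-support case the argument is just the existence of optima of a bounded LP over a polytope.

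For the $h\equiv1$ claim, I will match this LP with the $h_1$-formulation in the proof of Theorem~\ref{thm:sharp}. The calibration restrictions involve only $U$, so the map $\eta\mapsto h_1(u)=\E[\eta\1\{U=u\}]$ sends $\mathcal{H}$ onto the set of $h_1$ satisfying $0\le h_1\le p_U$ and the same linear calibration constraints. Onto-ness follows by taking $\eta(u,z,v)=h_1(u)/p_U(u)$, constant in $(z,v)$. The objective $\E[\eta]=\sum_u h_1(u)$ is preserved. The min and max therefore coincide with the prevalence bounds of Theorem~\ref{thm:sharp} and Propositions~\ref{prop:event}--\ref{prop:eventerror}, or with the LP of Section~\ref{sec:computation} when several restrictions are combined.

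The step I expect to need the most care is precisely this claim of exact equality: conditioning on the extra variables $(Z,V)$ must add no constraint. The lifting above handles it, provided the maintained calibration restrictions are stated in terms of $U$ alone. If calibration is allowed to condition on $Z$, I would restrict the claim accordingly. Finally, taking $h\in\{Z_1,\dots,Z_d,V\}$, which are bounded by hypothesis, gives intervals for the remaining components of $\mu$ by the same argument.
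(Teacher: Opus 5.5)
Your proposal is correct and takes the same route as the paper. The paper justifies the lemma only by noting that each calibration restriction becomes linear in $\eta$ after clearing denominators, so the problem is the fixed-$\theta$ LP of Section~\ref{sec:computation} with objective $\E[h\eta]$; your weak-$*$ compactness argument for closedness and the lifting $\eta(u,z,v)=h_1(u)/p_U(u)$ for exactness at $h\equiv1$ make explicit what the paper leaves implicit, with one small adjustment needed: $Z_j$ and $V$ are not assumed bounded in the paper, but your closedness argument only needs $h\in L^1(P)$, which the moment conditions behind the BLP already supply.
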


The calibration inequalities enter exactly as in Section~\ref{sec:computation}: a score restriction $\E[w(U)\mid X^*=1]\ge a$ becomes $\sum_u w(u)\,\E[\eta\mid U=u]\,p_U(u)\ge a\,\E[\eta]$ after writing $\E[X^*w(U)]=\E[w(U)\eta]$, and an event restriction becomes the corresponding linear inequality; both are linear in $\eta$. The lemma is therefore the paper's fixed-$\theta$ LP with the objective $\E[\eta]$ replaced by $\E[h\,\eta]$.

\begin{proposition}[Identified set for the regression coefficient]\label{prop:regbounds}
Let $\mathcal M\subseteq\R^{d+2}$ be the set of $\mu$ attainable by some feasible $\eta$; $\mathcal M$ is convex and compact, and is a polytope under score/event (linear) calibration. The sharp identified set for the BLP coefficient is the image
\[
B=\{\Phi(\mu;m_{\mathrm{obs}}):\mu\in\mathcal M\}.
\]
Each coordinate of $B$ is computed by the fixed-value method of Section~\ref{sec:computation}. By Frisch--Waugh--Lovell, $\gamma=\E[\tilde V\,\eta]\,/\,D(\theta,\E[X^*Z])$, where $\tilde V$ is the residual of $V$ on $(1,Z)$ (observed) and $D=\E[\tilde X^{*2}]\ge0$ depends only on the first-stage moments $(\theta,\E[X^*Z])$. Fixing those first-stage moments at any value in their calibrated region makes $\gamma=c$ linear in $\eta$, so $B$ is traced by a parametric family of LPs indexed by that low-dimensional region.
\end{proposition}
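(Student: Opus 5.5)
The plan is to work entirely in the space of attachment functions $\eta$. Define $\mathcal H$ as the set of measurable $\eta:(u,z,v)\mapsto[0,1]$ that satisfy marginal consistency and every maintained calibration restriction. Marginal consistency is automatic once $\eta$ is a conditional probability given the fully observed $(U,Z,V)$: any such $\eta$ induces a joint law of $(X^*,U,Z,V)$ with the correct observed marginal. The calibration restrictions, after multiplying through by $\theta=\E[\eta]$ as in Section~\ref{sec:computation} and Lemma~\ref{lem:latentmoments}, are finitely many weak linear inequalities in $\eta$. The map $T:\eta\mapsto\mu(\eta)=(\E[\eta],\E[Z\eta]',\E[V\eta])'$ is linear, so $\mathcal M=T(\mathcal H)$.

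\emph{Convexity and compactness.} Convexity follows because $\mathcal H$ is convex (a box intersected with half-spaces) and $T$ is linear. For compactness, I would assume $\E\|Z\|<\infty$ and $\E|V|<\infty$, which is implied by nonsingularity and finiteness of the BLP moments. View $\mathcal H$ as a subset of the unit ball of $L^\infty(P_{UZV})$ with the weak-* topology. The box $0\le\eta\le1$ is weak-* compact by Banach--Alaoglu, and each linear calibration constraint is weak-* closed because its coefficients $w(U)$, $\1\{U\in A\}$, $1$ lie in $L^1$. The map $T$ is weak-* continuous since $1,Z,V\in L^1$, so $\mathcal M$ is compact. For the polytope claim in the finite case ($U,Z,V$ finitely supported), $\mathcal H$ is a polytope in finitely many variables $\eta(u,z,v)$, and its linear image is a polytope. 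In general I would state the polytope claim for discretely supported $(Z,V)$, or note that $\mathcal M$ is the image of a polyhedral set.

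\emph{Sharpness of $B$.} This is the step I expect to be the main obstacle, but only for notational bookkeeping. Every joint law of $(X^*,U,Z,V)$ compatible with the data and restrictions is summarized by its $\eta\in\mathcal H$, and conversely every $\eta\in\mathcal H$ defines such a law. So the set of attainable $\mu$ is exactly $\mathcal M$. By the reduction paragraph, $\E[WW']$ and $\E[WV]$ are affine in $\mu$ given $m_{\mathrm{obs}}$, so $\beta=\Phi(\mu;m_{\mathrm{obs}})$ wherever $\E[WW']$ is nonsingular. This gives $B=\Phi(\mathcal M)$: validity comes from the forward direction and sharpness from the converse. I would restrict to $\mu$ with $D>0$ (equivalently, $\E[WW']$ nonsingular) and exclude degenerate $\mu$ from $\mathcal M$ in the statement.

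\emph{Frisch--Waugh--Lovell representation.} Let $L$ be the linear projection onto $(1,Z)$, which is identified. Then $\gamma=\E[\tilde X^*V]/\E[\tilde X^{*2}]$ with $\tilde X^*=X^*-L(X^*)$. Because $\tilde V\perp(1,Z)$, we get $\E[\tilde X^*V]=\E[X^*\tilde V]=\E[\tilde V\eta]$. Using $(X^*)^2=X^*$, $\E[\tilde X^{*2}]=\theta-\E[X^*(1,Z')]\,\E[(1,Z')'(1,Z')]^{-1}\E[(1,Z')'X^*]$, which depends only on $(\theta,\E[X^*Z])$. Fixing these first-stage moments at $(t,\zeta)$ adds linear equality constraints on $\eta$. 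The range of $\gamma$ on that slice is then $[\min,\max]\,\E[\tilde V\eta]/D(t,\zeta)$, computed by two LPs of the form in Lemma~\ref{lem:latentmoments}. Taking the union over $(t,\zeta)$ in the projection of $\mathcal M$ onto its first $d+1$ coordinates traces the coordinate of $B$ for $\gamma$. Other coordinates follow the same way, since on a fixed first-stage slice $\beta$ is affine in $\E[X^*V]$.
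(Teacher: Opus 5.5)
Your proposal is correct and follows the route the paper sketches in the statement and the surrounding text: reduce $\beta$ to the latent cross-moments $\mu$, write $\mu$ as a linear image of the convex set of feasible $\eta$, then use Frisch--Waugh--Lovell so that fixing $(\theta,\E[X^*Z])$ makes $\gamma$ (and indeed all of $\beta$) affine in $\eta$, which gives the parametric family of LPs. Your additions are warranted refinements of points the paper leaves implicit: the weak-* compactness argument for general supports, restricting to $\mu$ with $D>0$ so that $\E[WW']$ is nonsingular, and the caveat that the polytope claim needs finitely supported $(Z,V)$ (with continuous $V$, $\mathcal M$ is a zonoid-type set and need not be polyhedral).
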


Replacing $\mathcal M$ by the box of component-wise intervals from Lemma~\ref{lem:latentmoments} gives valid but generally conservative outer bounds; the joint program is sharp. This is the same distinction drawn for the named-vector LP versus the marginal intersection in Section~\ref{sec:empirical}.

\begin{remark}[Conditional versus marginal calibration]
The bounds are valid under the calibration of Section~\ref{sec:calibration}, which constrains $\eta$ only through its $U$-marginal and so permits $\eta$ to vary freely with $(Z,V)$ within report cells. If instead calibration is validated \emph{within} covariate strata---$\E[w(U)\mid X^*=1,Z=z]\ge a(z)$, the natural design when validation data carry $Z$---then within-stratum prevalences and outcome cross-moments are bounded separately and the coefficient bounds tighten accordingly. Conditional calibration is to this section what reporter-specific calibration was to Design~II: it is where the covariates earn their keep.
\end{remark}

\begin{remark}[Relation to misclassified-regressor bounds]
With $U$ a single noisy report, this is the partial-identification problem for a regression with a misclassified binary regressor \citep{Bollinger1996,Mahajan2006,Hu2008,Molinari2008}. Our contribution is not a new bound for that problem but the observation that the \emph{same} externally calibrated scores and events that bound prevalence also bound the regression coefficient, through the single channel of the latent cross-moments $\mu$; no independence or instrument is invoked.
\end{remark}

\paragraph{A transparent special case.} With no covariates and target the mean contrast $\gamma=\E[V\mid X^*=1]-\E[V\mid X^*=0]$,
\[
\gamma=\frac{\E[VX^*]-\theta\,\E[V]}{\theta(1-\theta)},
\]
so $\gamma$ is pinned down once $\theta$ and $\E[VX^*]$ are, each bounded by Lemma~\ref{lem:latentmoments}; the identified set is obtained by ranging the numerator and $\theta$ jointly over their calibrated region. If, in addition, $V$ is itself a calibrated high-confidence label, the contrast inherits the sharpness of Section~\ref{sec:calibration}.

\paragraph{Inference.} Sampling uncertainty enters through the observed law of $(U,Z,V)$ and, for honest calibration, through the validation estimates of the calibration constants. Propagating the bands \eqref{eq:hoeffding} of Section~\ref{sec:computation} (or a multiplier bootstrap) through the LPs of Lemma~\ref{lem:latentmoments} yields a valid outer confidence set for each coordinate of $\beta$; moment-inequality methods apply verbatim because all restrictions are linear in $\eta$.

\section{Practical recommendations}\label{sec:recommendations}

\paragraph{Use calibration as the main identifying content.} The maintained assumptions should be calibrated score and event restrictions. Report the source of calibration, preferably lower confidence bounds from validation data on a split disjoint from the main sample.

\paragraph{Treat weak ordering as regularity, not identification.} FOSD, MLR, and weak mean ordering are plausible but do not rule out $q_0=q_1$ and should not be presented as identifying assumptions.

\paragraph{Report sharp sets where they are cheap.} For binary scores and events, the linear bounds are already sharp. For graded scores, report the sharp set of Theorem~\ref{thm:sharp}; it costs a sort, and the gap from the linear bounds measures the information in the score's shape.

\paragraph{Exploit design symmetries, not independence assumptions.} If prompt labels are exchangeable by design, reduce the matrix to the column-pattern histogram $N$ under matched, invariant restrictions (Assumption~\ref{ass:compat}) rather than imposing conditional independence.

\paragraph{Store the richest object possible.} For Design I, store all repeated responses even if the count is analyzed. For Design II, store the named vector. For Design III, store the full matrix whenever feasible; under prompt exchangeability, store or construct $N$; at minimum store the model-count vector. The training-set panel of Section~\ref{sec:empirical} shows what is lost otherwise.

\paragraph{Report coarsening and influence diagnostics.} Compare bounds under richer and coarser summaries; report whether results depend on particular models or prompts; quantify losses along $R\to N\to T\to C$ where relevant.

\paragraph{Keep sensitivity assumptions separate.} Directional asymmetry, anchored separation, and independence-style restrictions can be useful, but label them as sensitivity analysis unless independently justified.

\section{Conclusion}\label{sec:conclusion}

The identifying content of LLM measurement panels depends on the source of replication. Asking one LLM repeated exchangeable questions creates a count experiment; asking several named LLMs once creates a named-reporter experiment; asking several named LLMs repeated questions creates a two-way response matrix. These designs should not be collapsed into a single vote-count model at the outset.

The general theory is simple, and we present it as an adaptation of known mixture and misclassification logic to a setting where dependence among reporters is the rule. Every finite summary yields a two-component mixture for the latent truth; without restrictions, prevalence is completely unidentified, and weak ordering restrictions do not help because they permit equality of the latent components. Useful identification comes from calibrated scores and calibrated events, of which reporter-specific accuracy, threshold rules, relaxed support, unanimity, weighted ensembles, and matrix-agreement events are special cases. The sharpness analysis clarifies exactly what each calibrated object delivers: linear bounds that are sharp for binary scores and for any analysis recording only the score mean, and a rearrangement characterization of the strictly sharper set available from the full score law.

The matrix design adds one further lesson. Even when entries of the matrix are arbitrarily correlated, design symmetries justify lossless reductions: under prompt exchangeability and matched invariant restrictions, the column-pattern histogram preserves all identifying information in the full matrix. Calibration identifies; storage and symmetry determine what can be calibrated without loss.

\appendix

\section{Multiple observable groups}\label{app:groups}

Suppose items belong to observable groups $W\in\{1,\dots,G\}$ with known shares $\rho_g$ and group-specific observed histograms $p_g(s)=\Prob(S=s\mid W=g)$, each group has prevalence $\pi_g=\Prob(X^*=1\mid W=g)$, and the latent class-conditional count distributions are common across groups:
\[
p_g(s)=(1-\pi_g)q_0(s)+\pi_g q_1(s),\qquad g=1,\dots,G,
\]
with target $\theta=\sum_g\rho_g\pi_g$. The common-$q_z$ restriction is an exclusion restriction---groups shift prevalence but not measurement technology---in the spirit of \citet{HenryKitamuraSalanie2014}, and should be used only when measurement errors are plausibly stable across groups. The exact model is bilinear in $(\pi_g,q_z)$; a conservative LP is obtained from McCormick envelopes for the products $\pi_gq_1(s)$ and $(1-\pi_g)q_0(s)$.

\newpage 
\bibliographystyle{apalike}

\bibliography{references}
\end{document}